\newcommand{\N}{\mathbb{N}} \newcommand{\R}{\mathbb{R}}
\newcommand{\0}{\mathbb{0}} 
\newcommand{\1}{\mathbb{1}}
 \newcommand{\cF}{\mathcal{F}}
 \newcommand{\cL}{\mathcal{L}}
\newcommand{\cM}{\mathcal{M}} 
\newcommand{\cO}{\mathcal{O}}
\newcommand{\todo}[1]{{\tiny\color{red}$\scriptscriptstyle\circ$}{\marginpar{\flushleft\baselineskip=2pt\tiny\sf$\bullet$\color{red}#1}}}
\newcommand{\uni}{uni}
\newcommand{\Muni}{\cM_{\uni}}
\DeclareMathOperator{\suppOp}{supp}
\newcommand{\supp}{\suppOp}
\DeclareMathOperator{\convOp}{conv}
\newcommand{\conv}{\convOp}
\newcommand{\fix}{\mathrm{Fix}}
\DeclareMathOperator{\exOp}{excess}
\newcommand{\ex}{\exOp}
\DeclareMathOperator{\symOp}{diff}
\newcommand{\sym}{\symOp}
\DeclareMathOperator{\parent}{parent}
\DeclareMathOperator{\optop}{top}
\DeclareMathOperator{\excess}{excess}
\DeclareMathOperator{\opspan}{span}
\newcommand{\floor}[1]{\ensuremath{\left\lfloor#1\right\rfloor}}
\begin{document}
\title{Computing the Nucleolus of Weighted Cooperative Matching Games in Polynomial Time \thanks{This work was done in part while the second author was visiting the Simons Institute for the Theory of Computing. Supported by DIMACS/Simons Collaboration on Bridging Continuous and Discrete Optimization through NSF grant \#CCF-1740425.}\thanks{We acknowledge the support of the Natural Sciences and Engineering Research Council of Canada (NSERC). Cette recherche a \'et\'e financ\'ee par le Conseil de recherches en sciences naturelles et en g\'enie du Canada (CRSNG).}}
%
%
\author{Jochen K\"{o}nemann\inst{1} \and
Kanstantsin Pashkovich\inst{2} \and
Justin Toth\inst{1}}
\authorrunning{J. K\"{o}nemann et al.}
%
\institute{University of Waterloo, Waterloo ON N2L 3G1,  Canada 
\email{\{jochen, wjtoth\}@uwaterloo.ca}\and
University of Ottawa, Ottawa ON K1N 6N5, Canada \email{kpashkov@uottawa.ca}}
\maketitle              
\begin{abstract}
  We provide an efficient algorithm for computing the nucleolus for an
  instance of a weighted cooperative matching game. This resolves a
  long-standing open question posed in [Faigle, Kern, Fekete,
  Hochst\"{a}ttler, Mathematical Programming, 1998].
\keywords{Combinatorial Optimization  \and Algorithmic Game Theory \and Matchings}
\end{abstract}

\section{Introduction}\label{sec:intro}

Imagine a network of players that form partnerships to generate value. For example, maybe a tennis league pairing players to play exhibition matches~\cite{Biro2012}, or people making trades in an exchange network~\cite{SS71}. These are examples of what are called \emph{matching games}. In a (weighted) matching game, we are given a graph $G=(V,E)$, weights $w: E \rightarrow \R_{\geq 0}$, the player set is the set $V$ of nodes of $G$, and $w(uv)$ denotes the value earned when $u$ and $v$ collaborate. Each coalition $S\subseteq V$ is assigned a \emph{value} $\nu(S)$ so that $\nu(S)$ is equal to the value of a maximum weight matching on the induced subgraph $G[S]$. The special case of matching games where $w=\1$ is the all-ones vector, and $G$ is bipartite is called an {\em assignment game} and was introduced in a classical paper by Shapley and Shubik~\cite{SS71}, and was later generalized to general graphs by Deng, Ibaraki, and Nagamochi~\cite{DIN99}.  

We are interested in what a fair redistribution of the total value $\nu(V)$ to the players in the network looks like. The field of \emph{cooperative game theory} gives us the language to make this question formal. A vector $x \in \R^V$ is called an {\em allocation} if $x(V)=\nu(V)$ (where we use $x(V)$ as a short-hand for $\sum_{i \in V}x(i)$ as usual). Given such an allocation, we let $x(S)-\nu(S)$ be the {\em excess} of coalition $S \subseteq V$. This quantity can be thought of as a measure of the satisfaction of coalition $S$. A fair allocation should maximize the bottleneck excess, i.e. maximize the minimum excess, and this can be  accomplished by an LP:
\begin{align*} \label{eq:coalition_leastcore}
  \max ~~ & \varepsilon \tag{$P$} \\
  \text{s.t.} ~~ & x(S) \geq \nu(S) + \varepsilon \qquad \text{for
                          all}\quad S \subseteq V \\
&  x(V) = \nu(V)\\
& x \geq \0.
\end{align*}
Let $\varepsilon^*$ be the optimum value of
\eqref{eq:coalition_leastcore}, and define $P(\varepsilon^*)$ to be
the set of allocations $x$ such that $(x,\varepsilon^*)$ is feasible
for \eqref{eq:coalition_leastcore}. The set $P(\varepsilon^*)$ is
known as the {\em leastcore}~\cite{MPS79} of the given cooperative
game, and the special case when $\varepsilon^* = 0$, $P(0)$ is the
well-known {\em core}~\cite{Gi59} of $(V,\nu)$. Intuitively,
allocations in the core describe payoffs in which no coalition of
players could profitably deviate from the \emph{grand coalition} $V$.

Why stop at maximizing the bottleneck excess? Consider an allocation
which, subject to maximizing the smallest excess, maximizes the second
smallest excess, and subject to that maximizes the third smallest
excess, and so on. This process of successively optimizing the excess
of the worst-off coalitions yields our primary object of interest, the
\emph{nucleolus}. For an allocation $x\in \R^V$, let
$\theta(x) \in \R^{2^V-2}$ be the vector obtained by sorting the list
of excess values $x(S) - \nu(S)$ for any
$\varnothing \neq S \subset V$ in non-decreasing order \footnote{It is
  common within the literature, for instance in~\cite{Kern2003}, to
  exclude the coalitions for $S = \varnothing$ and $S = V$ in the
  definition of the nucleolus. On the other hand, one could also
  consider the definition of the nucleolus with all possible
  coalitions, including $S = \varnothing$ and $S = V$. We note that
  the two definitions of the nucleolus are equivalent in all instances
  of matching games except for the trivial instance of a graph
  consisting of two nodes joined by a single edge.}. The {\em
  nucleolus}, denoted $\eta(V,\nu)$ and defined by
Schmeidler~\cite{schmeidler1969nucleolus}, is the unique allocation
that lexicographically maximizes $\theta(x)$:
 $$\eta(V,\nu) := \text{arg lex max}\{ \theta(x): x \in P(\varepsilon^*)\}.$$
We refer the reader to Appendix \ref{appendix:example} for an example
instance of the weighted matching game with its nucleolus. 
We now have sufficient terminology to state our main result:

\begin{theorem}\label{th:main-result}
Given a graph $G=(V,E)$ and weights $w:E\rightarrow \R$, the nucleolus $\eta(V,\nu)$ of the corresponding weighted matching game can be computed in polynomial time. 
\end{theorem}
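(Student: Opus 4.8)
The plan is to follow the classical iterative linear-programming scheme of Maschler, Peleg, and Shapley for the nucleolus, and to make every linear program in the scheme solvable in polynomial time by exhibiting a compact description of the constraints that arise. Concretely, set $\mathrm{LP}_1$ to be the leastcore program $(P)$. Suppose $\mathrm{LP}_1,\dots,\mathrm{LP}_{k-1}$ have been solved, with optimal values $\varepsilon_1>\varepsilon_2>\cdots>\varepsilon_{k-1}$, and let $\cF_j$ denote the collection of coalitions $S$ whose excess $x(S)-\nu(S)$ is constant (necessarily equal to $\varepsilon_j$) on the optimal face of $\mathrm{LP}_j$. Define $\mathrm{LP}_k$ as: maximize $\varepsilon$ subject to $x(S)-\nu(S)=\varepsilon_j$ for every $S\in\cF_j$ and $j<k$; $x(S)-\nu(S)\ge\varepsilon$ for every remaining coalition $S$; $x(V)=\nu(V)$; and $x\ge\0$. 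A standard argument shows that the affine hull of the optimal face of $\mathrm{LP}_k$ has strictly smaller dimension than that of $\mathrm{LP}_{k-1}$, so the scheme terminates after at most $\size{V}$ rounds at a single point, which by the Maschler--Peleg--Shapley theorem is exactly $\eta(V,\nu)$.

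It therefore suffices to solve each $\mathrm{LP}_k$ in polynomial time, which I would do via the ellipsoid method together with a polynomial-time separation oracle. Separating $x\ge\0$ and $x(V)=\nu(V)$ is trivial. For the $\varepsilon$-inequalities one must, given $(x,\varepsilon)$, decide whether $x(S)-\nu(S)\ge\varepsilon$ for every coalition $S$ not yet fixed. Since $x\ge\0$, for a fixed matching $M$ the cheapest coalition containing $V(M)$ is $V(M)$ itself, so $\min_S\{x(S)-\nu(S)\}=\min_M\sum_{uv\in M}\bigl(x(u)+x(v)-w(uv)\bigr)$, a minimum-weight (not necessarily perfect) matching computation solvable in polynomial time. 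Finally, for the equality constraints coming from $\cF_1,\dots,\cF_{k-1}$ one must check both $x(S)-\nu(S)\le\varepsilon_j$ and $x(S)-\nu(S)\ge\varepsilon_j$ for all $S\in\cF_j$ --- a family that can a priori be exponentially large, and which must moreover be excluded from the minimization over free coalitions above.

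The main obstacle, and the technical heart of the argument, is thus to understand the collections $\cF_j$ of tight coalitions that appear along the scheme for the weighted matching game, and to show they admit a description compact enough to (a) store implicitly, (b) separate over in polynomial time, and (c) update from one round to the next. The plan is to prove, by induction on $k$, the invariant that the feasible region of $\mathrm{LP}_k$ is cut out by $x\ge\0$, the equation $x(V)=\nu(V)$, the $\varepsilon$-inequalities, and a polynomially describable system of equalities --- encoded, for instance, through the matchings that realize the tight coalitions together with the Gallai--Edmonds structure of $G$ and a laminar family on $V$ --- so that the minimization over free coalitions and the verification over fixed coalitions both reduce to polynomially many weighted-matching or min-cost-flow computations. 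I would first carry this out for $\mathrm{LP}_2$, using and refining the known description of the leastcore of weighted matching games, and then argue that fixing a round's tight coalitions only refines the description in a controlled way. I expect this structural/inductive step --- especially handling arbitrary edge weights, where the leastcore is considerably more delicate than in the cardinality case treated by Kern and Paulusma --- to be where essentially all the difficulty lies; once it is in place, combining the polynomially many polynomial-time solvable LPs with the Maschler--Peleg--Shapley characterization yields Theorem~\ref{th:main-result}.
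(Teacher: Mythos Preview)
Your high-level framework is exactly the one the paper uses: run Maschler's scheme and argue that each $\mathrm{LP}_k$ can be solved in polynomial time. Your separation argument for the leastcore $\mathrm{LP}_1$ is also the standard one the paper uses. But from there on your proposal is a plan, not a proof. You yourself say that ``essentially all the difficulty lies'' in the structural/inductive step that controls the fixed collections $\cF_j$ for $j\ge 2$, and then you do not carry that step out. That is precisely the content of the theorem; everything else was already known.

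Two concrete issues. First, you appeal to ``the known description of the leastcore of weighted matching games'' and to ``the Gallai--Edmonds structure of $G$''. No such description was known prior to this paper, and the paper explicitly notes that the Gallai--Edmonds route used by Kern and Paulusma in the unweighted (or node-induced) case does \emph{not} extend to arbitrary edge weights. So the tool you reach for is not available. Second, even granting a separation oracle for the leastcore, you give no mechanism for (a) separating the equality constraints $x(S)-\nu(S)=\varepsilon_j$ over a potentially exponential $\cF_j$, nor for (b) minimizing excess over coalitions \emph{outside} $\bigcup_{j<k}\cF_j$. A raw minimum-weight matching computation will happily return a matching whose vertex set lies in some $\cF_j$, and you have no way to rule that out or to move past it.

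What the paper actually does is different in kind: it produces, for every round, an explicit polynomial-size LP $(\overline{P}_j)$ (not a separation oracle). The key new ingredients are the notion of a \emph{universal allocation} $x^*$ in the relative interior of the leastcore, the associated \emph{universal matchings}, and the description of the optimal face of the matching polytope for the weights $w(uv)-x^*(uv)$ via a laminar family $\cL$ of blossoms. The crucial structural fact (Lemma~\ref{lem:leastcore_properties}) is a symmetry statement: for every leastcore allocation $x$ and every maximal blossom $S_i^*\in\cL$, the deviation $x(u)-x^*(u)$ is constant over $u\in S_i^*$, and edges outside the blossoms have nonnegative excess. This collapses the exponentially many coalition constraints to constraints indexed by single edges in $E^+$ and $E^*$ and single vertices, yielding the compact formulations $(\overline{P}_1)$ and then $(\overline{P}_j)$. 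None of this is hinted at in your proposal, and it is not a routine refinement of the unweighted argument.
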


Despite its intricate definition the concept of the nucleolus is surprisingly ancient. Its history can be traced back to a discussion on bankruptcy division in the Babylonian Talmud~\cite{AM85}. Modern research interest in the nucleolus stems not only from its geometric beauty~\cite{MPS79}, or several practical applications (e.g., see~\cite{BST05,Le84}), but from the strange way problems of computing the nucleolus fall in the complexity landscape, seeming to straddle the NP vs P boundary.

Beyond being one of the most fundamental problems in combinatorial optimization, starting with the founding work of Kuhn on the Hungarian method for the assignment problem~\cite{kuhn1955hungarian}, matching problems have historically teetered on the cusp of hardness. For example, prior to Edmonds' celebrated Blossom Algorithm~\cite{edmonds1965paths,edmonds1965maximum} it was not clear whether Maximum Matching belonged in P. For another example, until Rothvo{\ss}' landmark result~\cite{rothvoss2017matching} it was thought that the matching polytope could potentially have sub-exponential extension complexity. In cooperative game theory, matchings live up to their historical pedigree of representing a challenging problem class. The long standing open problem in this area was whether the nucleolus of a weighted matching game instance can be computed in polynomial time. The concept of the nucleolus has been known since $1969$~\cite{schmeidler1969nucleolus}, and the question was posed as an important open problem in multiple papers. In $1998$, Faigle, Kern, Fekete, and Hochst{\"a}ttler~\cite{faigle1998nucleon} mention the problem in their work on the nucleon, a multiplicative-error analog to the nucleolus which they show is polynomial time computable. Kern and Paulusma state the question of computing the nucleolus for general matching games as an important open problem in $2003$~\cite{Kern2003}. In $2008$, Deng and Fang~\cite{deng2008algorithmic} conjectured this problem to be NP-hard, and in $2017$ Bir{\'{o}}, Kern, Paulusma, and Wojuteczky~\cite{Biro2017} reaffirmed this problem as an interesting open question. Theorem~\ref{th:main-result} settles the question, providing a polynomial-time algorithm to compute the nucleolus of a general instance of a weighted cooperative matching game.

Our approach to proving Theorem~\ref{th:main-result} is to provide a compact description of each feasible region polytope in a hierarchical sequence of linear programs described in the well-known {\em Maschler scheme}~\cite{Ko67,MPS79} for computing the nucleolus. While there are a linear number of LPs in the sequence, their naive implementation requires an exponential number of constraints. The base LP for this approach is the least core. It is known how to separate over the least core, but the challenge lies in solving all successive linear programs in the sequence.  Previous results in~\cite{KP03,Pa01} made use of the {\em unweighted} nature of node-weighted instances of matching games, and were able to employ the \emph{Edmonds-Gallai} structure theorem to derive compact formulations for the LPs in Maschler's hierarchy.  We do not know how to extend this line of work beyond node-weighted instances.

In our work, we identify a minimal family of tight excess constraints of (\ref{eq:coalition_leastcore}) corresponding to coalitions whose vertices are saturated by so called $\emph{universal matchings}$. A matching is universal if it saturates a coalition of vertices that is tight for all allocations in the least core. As we will show, universal matchings are the optimal matchings for carefully chosen cost functions derived from so called \emph{universal allocations} (see Section~\ref{subsec:uni-matching}). 
From here now, we rely on well-known characterizations of extreme points of matching polyhedra, and the fact that these 
are defined by laminar families of blossom constraints. Ultimately, the above allows us to obtain a decomposition of the input graph into the edges on either the inside or outside of blossoms. The structure of the associated optimum face of the matching polytope (e.g., see Schrijver~\cite{Sc02}) elucidates the structure of excess over all coalitions in the matching game. Our proof uses a critical insight into the symmetric nature of exchange on the nodes of a blossom as we move between leastcore allocations (see Lemma~\ref{lem:leastcore_properties}). We present the details of the leastcore LP in Section~\ref{sec:leastcore}. We show how this formulation can be used in Maschler's framework to compute the nucleolus in Section~\ref{sec:maschler}.

Prior to our work, the nucleolus was known to be polynomial-time computable only in structured instances of the matching game. Solymosi and Raghavan~\cite{SR94}
showed how to compute the nucleolus in an (unweighted) assignment game
instance in polynomial time. 
Kern and Paulusma~\cite{KP03} later
provided an efficient algorithm to compute the nucleolus in general
unweighted matching game instances. Paulusma~\cite{Pa01} extended the
work in~\cite{KP03} and gave an efficient algorithm to compute the
nucleolus in matching games where edge weights are induced by node
potentials. Farczadi~\cite{Fa15} finally extended Paulusma's framework further
using the concept of {\em extendible allocations}. 
We note also that it is easy to compute the nucleolus in weighted instances
of the matching game with non-empty core. For such instances, the
leastcore has a simple compact description that does not include
constraints for coalitions of size greater than $2$. Thus it is relatively straightforward to adapt the iterative algorithm of Maschler~\cite{MPS79} to a polynomial-time algorithm for computing the nucleolus (e.g., see~\cite[Chapter 2.3]{Fa15} for the details, Section~\ref{subsec:maschler} for an overview). 

\subsection{Related Work}

In a manner analogous to how we have defined matching games, a wide variety \emph{combinatorial optimization games} can be defined~\cite{DIN99}. In such games, the value of a coalition $S$ of players is succinctly given as the optimal solution to an underlying combinatorial optimization problem. It is natural to conjecture that the complexity of computing the nucleolus in an instance of such a game would fall in lock-step with the complexity of the underlying problem. Surprisingly this is not the case. For instance, computing the nucleolus is known to be NP-hard for network flow games~\cite{DFS09}, weighted threshold games~\cite{EG+07}, and spanning tree games~\cite{FKK98,FKP00}. On the other hand, polynomial time algorithms are known for finding the nucleolus of special cases of flow games, certain classes of matching games, fractional  matching games, and convex games~\cite{BH+10,chen2012computing,DFS09,FKK01,Fa15,GGZ98,GM+96,KP03,KSA00,Me78,Pa01,PRB06,SR94}.

One application of cooperative matching games is to network bargaining~\cite{EK10,Wi99}, where one considers a population of players that are interconnected by an underlying social network. Players engage in profitable {\em partnerships} -- each player with at most one of its neighbours. The profit generated through a partnership then needs to be shared between the participating players in an equitable way.  
Cook and Yamagishi~\cite{CY92} first proposed an elegant profit-sharing model that not only generalizes Nash's famous $2$-player bargaining solution~\cite{Na50}, but also validates empirical findings from the lab setting. 
Cook and Yamagishi's model defines a so called {\em outside option} for each player $v$ that essentially denotes the largest share that $v$ could demand in a partnership with any of its neighbours. An outcome is called {\em stable} if each player receives a profit share that is at least her outside option, and an outcome is {\em balanced} if value generated in excess of the sum of the outside options of the involved players is shared evenly. Kleinberg and Tardos~\cite{Kleinberg2008} showed that an instance of network bargaining has a balanced outcome if and only if it has a stable outcome. Bateni et al.~\cite{BH+10} later noted that stable outcomes for network bargaining correspond to elements of the core of the underlying cooperative matching game, and that balanced outcomes correspond to elements in the intersection of core and prekernel. 

It is well-known that the prekernel of a cooperative game may be
non-convex and that it may even be disconnected~\cite{Ko67,St68}. Despite this,
Faigle, Kern and Kuipers~\cite{FKK01} showed how to compute a point in
the intersection of prekernel and leastcore in polynomial time under
the reasonable assumption that the game has a polynomial time oracle
to compute the minimum excess coalition for a given allocation. Later
the same authors~\cite{faigle2006computing} refine their result to
computing a point in the intersection of the core and lexicographic
kernel, a set which is also known to contain the nucleolus. Bateni et
al. pose as an open question the existence of an efficiently
computable, balanced and {\em unique} way of sharing the profit. It is
well-known that the nucleolus lies in the intersection of core and
prekernel (if these are non-empty) and is always
unique~\cite{schmeidler1969nucleolus}. Theorem~\ref{th:main-result}
therefore resolves the latter open question left in~\cite{BH+10}.


Connections have also been made between matching games and stable matchings. Allocations in the core of the corresponding matching game have been found to be in 1-to-1 correspondence with stable matchings with payments, a variant of the Gale and Shapley's~\cite{gale1962college} famous stable marriage problem. See Koopmans and Beckmann~\cite{koopmans1957assignment}, Shapley and Shubik~\cite{SS71}, Eriksson and Karlander~\cite{eriksson2001stable}, and  Biro, Kern, and Paulusma~\cite{Biro2012} for details.

\subsection{Leastcore and Core of Matching Games.}\label{subsec:leastcore}

 It is straightforward to see that \eqref{eq:coalition_leastcore} can be rewritten equivalently as
\begin{align*}
\max ~ & \varepsilon \\
\text{s.t.} ~ & x(M) \geq w(M) + \varepsilon \qquad \text{for all}\quad M\in \cM \tag{{$P_1$}}\label{eq:matching_leastcore}\\
& x(V) = \nu(G)\\
& x \geq \0\,,
\end{align*}
where $\cM$ is the set of all matchings $M$ on $G$, and $x(M)$ is a shorthand for $x(V(M))$.

The separation problem for the linear
program~\eqref{eq:matching_leastcore} can be reduced to finding a
maximum weight matching in the graph $G$ with edge weights
$w(uv)-x(uv)$, $uv\in E$ (where we use $x(uv)$ as a shorthand for
$x(u)+x(v)$). Since the maximum weight matching can be found in
polynomial time~\cite{edmonds1965maximum}, we know that the linear
program~\eqref{eq:matching_leastcore} can be solved in polynomial time
as well~\cite{grotschel2012geometric}.

We use $\varepsilon_1$ to denote the optimal value of \eqref{eq:matching_leastcore} and $P_1(\varepsilon_1)$ for the set of allocations $x$ such that $(x,\varepsilon_1)$ is feasible for the leastcore linear program~\eqref{eq:matching_leastcore}.  In general, for a value $\varepsilon$ and a linear program~$Q$ on variables in $\R^V\times \R$ we denote by $Q(\varepsilon)$ the set $\{ x\in \R^V: (x,\varepsilon) \text{ is feasible for } Q\}$.

Note that $\varepsilon_1\leq 0$. Indeed, $\varepsilon \leq 0$ in any feasible solution $(x, \varepsilon)$ to \eqref{eq:matching_leastcore} as otherwise $x(M)$ would need to exceed $w(M)$ for all matchings $M$. In particular this would also hold for a maximum weight matching on $G$, implying that $x(V) > \nu(G)$. 
If $\varepsilon_1=0$ then the core of the cooperative matching game is non-empty. One can see that
$\varepsilon_1=0$ if and only if the value of a maximum weight
matching on $G$ with weights $w$ equals the value of a maximum weight
fractional matching. This follows since $x\in P_1(\varepsilon_1)$ is a
fractional weighted node cover of value $\nu(G)$ when
$\varepsilon_1=0$.  When $\varepsilon_1<0$, we say that the
cooperative matching game has an empty core. 
The matching game instance given in Appendix~\ref{appendix:example} 
has an empty core.

In this paper, we assume that the cooperative matching game $(G,w)$ has an empty core, as computing the nucleolus is otherwise well-known to be doable in polynomial time~\cite{Pa01}.

\subsection{Maschler's scheme}\label{subsec:maschler}

As discussed, our approach to proving Theorem~\ref{th:main-result}
relies on Maschler's scheme. The scheme requires us to solve
a linear number of LPs: $\{(P_j)\}_{j \geq 1}$ that we now
define. $(P_1)$ is the leastcore LP that we have already seen in Section~\ref{subsec:leastcore}. 
LPs $(P_j)$ for $j \geq 2$ are defined inductively. Crucial in their
definition is the notion of {\em fixed} coalitions that we introduce first. For a
polyhedron $Q \subseteq \R^V$ we denote by $\fix(Q)$ the collection of
sets $S\subseteq V$ such that $x(S)$ is constant over the polyhedron
$Q$, i.e.
$$
	\fix(Q):= \{ S \subseteq V\,:\quad x(S) = x'(S)\quad \text{for all}\quad x,x'\in Q\}\,.
$$
With this we are now ready to state LP $(P_j)$ for $j \geq 2$:
\begin{align*}
&\max \quad \varepsilon \tag{$P_j$}\label{eq:formulation_Maschler}\\
&\begin{array}{rcll}
\text{s.t.}\quad
x(S)-\nu(S) &\geq& \varepsilon &\text{for all}\quad S\subset V,\, S\not\in \fix(P_{j-1}(\varepsilon_{j-1}))\\
x &\in & P_{j-1}(\varepsilon_{j-1})\,, &
\end{array}
\end{align*}
where $\varepsilon_j$ be the optimal value of the linear
program~\eqref{eq:formulation_Maschler}. Let $j^*$ be the minimum
number $j$ such that $P_j(\varepsilon_j)$ contains a single point.
This point is the nucleolus of the game~\cite{Davis1965}. It is
well-known~\cite{MPS79} that
$P_{j-1}(\varepsilon_{j-1}) \subset P_{j}(\epsilon_j)$ and
$\varepsilon_{j-1} < \varepsilon_j$ for all $j$. Since the dimension
of the polytope describing feasible solutions
of~\eqref{eq:formulation_Maschler} decreases in every round until the
dimension becomes zero, we have $j^*\leq |V|$
\cite{MPS79}\cite[Pages~20-24]{Pa01}. 

Therefore, in order to find the nucleolus of the
cooperative matching game efficiently it suffices to solve
each linear program~\eqref{eq:formulation_Maschler}, $j=1,\ldots,j^*$
in polynomial time. We accomplish this by providing polynomial-size
formulations for~\eqref{eq:formulation_Maschler} for all $j \geq 1$. 

In Section~\ref{sec:leastcore} we introduce the concept of
\emph{universal matchings} which are fundamental to our approach, and
give a compact formulation for the first linear program in Maschler's
Scheme, the leastcore. We also present our main technical lemma, Lemma
\ref{lem:leastcore_properties}, which provides a crucial symmetry
condition on the values allocations can take over the vertices of
blossoms in the graph decomposition we use to describe the compact
formulation. In Section~\ref{sec:maschler} we describe the successive
linear programs in Maschler's Scheme and provide a compact formulation
for each one in a matching game.

\section{Leastcore Formulation.}\label{sec:leastcore}
\paragraph{}
In this section we provide a polynomial-size description of~\eqref{eq:matching_leastcore}. It will be useful to define a notation for \emph{excess}: for any $x \in P_1(\varepsilon_1)$ and $M \in \cM$ let $\ex(x,M) := x(M) - w(M)$.
\subsection{Universal Matchings, Universal Allocations.}
For each $x \in P_1(\varepsilon_1)$ we say that a matching $M\in \cM$ is an $x$-tight matching whenever $\ex(x,M)=\varepsilon_1$. We denote by $\cM^{x}$ the set of $x$-tight matchings.

A {\it universal matching} $M \in \cM$ is a matching which is $x$-tight for all $x \in P_1(\varepsilon_1)$. We denote the set of universal matchings on $G$ by $\Muni$. 
 A {\it universal allocation} $x^* \in P_1(\varepsilon_1)$ is a leastcore point whose $x^*$-tight matchings are precisely the set of universal matchings, i.e. $\cM^{x^*}=\Muni$.
\begin{lemma}\label{lemma:universal-allocations-exist}
There exists a universal allocation $x^* \in P_1(\varepsilon_1)$.
\end{lemma}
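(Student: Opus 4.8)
The plan is to produce $x^*$ as a strict convex combination of finitely many leastcore points, using a standard "interior point of a face" argument. The key observation is that for each non-universal matching $M \in \cM \setminus \Muni$, by definition there exists some witness $x_M \in P_1(\varepsilon_1)$ with $\ex(x_M, M) > \varepsilon_1$ (it cannot be $<\varepsilon_1$ since every leastcore point satisfies all excess inequalities at value $\ge \varepsilon_1$). I would first argue that it suffices to consider only a finite set of such witnesses: although $\cM$ may be exponentially large, the set of \emph{faces} of the leastcore polytope $P_1(\varepsilon_1)$ induced by the tight matching constraints is finite, so one can pick a finite collection $x_1, \dots, x_k \in P_1(\varepsilon_1)$ such that for every non-universal $M$ there is some $x_i$ with $\ex(x_i, M) > \varepsilon_1$. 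Concretely, take $x_i$ ranging over the vertices of $P_1(\varepsilon_1)$; if $M$ were tight at every vertex then, by convexity of $x \mapsto \ex(x,M)$ (it is affine) and the fact that a polytope is the convex hull of its vertices, $M$ would be tight everywhere, hence universal — contradiction.

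Next I would set $x^* := \frac{1}{k}\sum_{i=1}^{k} x_i$. Since $P_1(\varepsilon_1)$ is convex, $x^* \in P_1(\varepsilon_1)$, so every universal matching is $x^*$-tight; it remains to check the reverse inclusion $\cM^{x^*} \subseteq \Muni$. For this, fix any $M \in \cM^{x^*}$, so $\ex(x^*, M) = \varepsilon_1$. Because $\ex(\cdot, M)$ is affine, $\ex(x^*, M) = \frac{1}{k}\sum_i \ex(x_i, M)$, and each term satisfies $\ex(x_i, M) \ge \varepsilon_1$; the average equalling $\varepsilon_1$ forces $\ex(x_i, M) = \varepsilon_1$ for all $i$. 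By the choice of the $x_i$'s (every non-universal matching is strictly slack at some $x_i$), this means $M$ cannot be non-universal, i.e. $M \in \Muni$. Hence $\cM^{x^*} = \Muni$ and $x^*$ is a universal allocation.

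The one genuine subtlety — the step I expect to need the most care — is justifying that a \emph{finite} witness set suffices, i.e. that there is no pathology coming from $\cM$ being exponential in size. The clean way is the vertex argument above: $P_1(\varepsilon_1)$ is a (bounded, since $x \ge \0$ and $x(V) = \nu(G)$) polytope with finitely many vertices, and an affine function attaining its minimum over the constraint it represents at every vertex must attain it everywhere. An equivalent, perhaps more self-contained route avoiding vertices: note that $\Muni = \{ M \in \cM : \ex(x, M) = \varepsilon_1 \ \forall x \in P_1(\varepsilon_1)\}$ is exactly the set of matching constraints that are implied to be tight, i.e. that are not facet-separating on $P_1(\varepsilon_1)$; relative-interior points of $P_1(\varepsilon_1)$ are then precisely the universal allocations, and every nonempty polytope has a nonempty relative interior, which is spanned by an average of a finite affinely spanning subset of vertices. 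Either phrasing closes the argument; I would present the vertex-averaging version as it is the most elementary.
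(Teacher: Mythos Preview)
Your argument is correct and is essentially the paper's own proof: the paper observes that any point in the relative interior of $P_1(\varepsilon_1)$ is a universal allocation, and your vertex-averaging construction is precisely a concrete way of producing such a relative-interior point (the barycenter of all vertices lies in the relative interior, by exactly the affine-tightness reasoning you give). The paper also records, in an appendix, a short alternative combinatorial proof: pick $x^*$ minimizing $|\cM^{x^*}|$, and if some $M \in \cM^{x^*}$ is not universal, average $x^*$ with a witness $x$ for which $M$ is slack to obtain $\bar x = \tfrac{1}{2}(x^*+x)$ with $\cM^{\bar x} \subsetneq \cM^{x^*}$, a contradiction---this avoids any mention of vertices or relative interiors but relies on the same convexity/affinity observation you use.
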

\begin{proof}
Indeed, it is straightforward to show that every $x^*$ in the relative interior (see \cite[Lemma 2.9(ii)]{ziegler2012lectures}) of $P_1(\varepsilon_1)$ is a universal allocation. If the relative interior is empty then $P_1(\varepsilon_1)$ is a singleton, which trivially contains a universal allocation. In Appendix~\ref{appendix-proofs} we provide a combinatorial proof.$\blacksquare$\end{proof} 
\begin{lemma}\label{lem:finding_universal_allocation}
A universal allocation $x^* \in P_1(\varepsilon_1)$ can be computed in  polynomial time.
\end{lemma}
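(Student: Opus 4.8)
The plan is to exhibit a single linear program whose optimal solutions are exactly the universal allocations, or at least to show that one optimal solution of a polynomial-size LP is a universal allocation. By Lemma~\ref{lemma:universal-allocations-exist}, any point in the relative interior of $P_1(\varepsilon_1)$ works, so it suffices to compute a relative-interior point of $P_1(\varepsilon_1)$ in polynomial time. First I would recall the standard fact (e.g.\ via Gr\"otschel–Lov\'asz–Schrijver \cite{grotschel2012geometric}) that, given a polynomial-time separation oracle for a polyhedron, one can compute its affine hull and a point in its relative interior in polynomial time; since we already have a separation oracle for \eqref{eq:matching_leastcore} (reducing to a maximum-weight matching computation, as noted in Section~\ref{subsec:leastcore}) and we know $\varepsilon_1$, this immediately yields a relative-interior point of $P_1(\varepsilon_1)$, hence a universal allocation.

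Alternatively, and more explicitly, I would set up an LP that directly forces strict slack on every non-universal tight constraint. Concretely, one can maximize an auxiliary variable $\delta$ subject to $x(M) \geq w(M) + \varepsilon_1 + \delta$ for all matchings $M$ that are not universal, together with $x \in P_1(\varepsilon_1)$ and a normalization bounding $\delta$. A point with $\delta > 0$ makes every non-universal matching strictly non-tight while keeping every universal matching tight, which is exactly the defining property of a universal allocation; and such a $\delta > 0$ exists precisely because the relative interior is non-empty (when it is empty, $P_1(\varepsilon_1)$ is a singleton and we are done). The separation oracle for this LP again reduces to maximum-weight matching: to separate, find a maximum-weight matching under weights $w(uv) - x(uv)$; if its excess is below $\varepsilon_1 + \delta$ we obtain a violated cut, and otherwise all matchings are either universal or have excess at least $\varepsilon_1 + \delta$, so feasibility holds.

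The main obstacle is that neither formulation literally has a separation oracle that can test the clause "$M$ is not universal'' — deciding whether a given matching is universal is not obviously a single matching computation. The resolution is that we never need to: in the first approach the relative-interior machinery of \cite{grotschel2012geometric} only uses the separation oracle for $P_1(\varepsilon_1)$ itself, and the universality property then comes for free from Lemma~\ref{lemma:universal-allocations-exist}; in the second approach, one replaces the troublesome constraint family by the equivalent requirement that the \emph{minimum} excess over \emph{all} matchings that are not $x$-tight is at least $\varepsilon_1 + \delta$, which after standard manipulation can still be enforced through maximum-weight matching computations. I expect the cleanest write-up to simply invoke the relative-interior result, so the real content is the (already-proved) Lemma~\ref{lemma:universal-allocations-exist} plus the observation that the leastcore admits an efficient separation oracle; the polynomial running time is then immediate.
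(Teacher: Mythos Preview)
Your first approach is exactly the paper's proof: invoke the Gr\"otschel--Lov\'asz--Schrijver machinery to compute a relative-interior point of $P_1(\varepsilon_1)$ in polynomial time (using the maximum-weight-matching separation oracle), and then appeal to Lemma~\ref{lemma:universal-allocations-exist} to conclude that this point is a universal allocation. Your alternative LP formulation is unnecessary and, as you yourself note, has a separation issue you never fully resolve; the paper simply omits anything of that sort.
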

\begin{proof}
A point $x^*$ in the relative interior of $P_1(\varepsilon_1)$ can be found in polynomial time using the ellipsoid method (Theorem 6.5.5~\cite{grotschel1988geometric},\cite{dadush2017}). Since any allocation $x^*$ from the relative interior of  $P_1(\varepsilon_1)$ is a universal allocation, this implies the statement of the lemma. 
$\blacksquare$\end{proof}
\paragraph{}
Given a non-universal allocation $x$ and a universal allocation $x^*$, we observe that $\cM^{x^*} \subset \cM^x$ and so $\theta(x^*)$ is strictly lexicographically greater than $\theta(x)$. Thus the nucleolus is a universal allocation. We emphasize that 
$\cM^{x^*}=\Muni$ is invariant under the (not necessarily unique) choice of universal
allocations $x^*$. Henceforth we fix a universal allocation~$x^* \in P_1(\varepsilon_1)$.

\subsection{Description for Convex Hull of Universal Matchings.}\label{subsec:uni-matching}
\paragraph{}
By the definition of universal allocation $x^*$, a matching $M$ is
universal if and only if it is $x^*$-tight. Thus, $M$ is a universal
matching if and only if its characteristic vector lies in the optimal
face of the matching polytope corresponding to (the maximization of)
the linear objective function assigning weight $-\ex(x^*,uv) = w(uv) - x^*(uv)$ to each
edge $uv \in E$. Let $\cO$ be the set of node sets $S \subseteq V$ such
that $|S| \geq 3$, $|S|$ is odd. Edmonds~\cite{edmonds1965maximum} gave a linear description of the matching
polytope of $G$ as the set of $y \in \R^{E}$ satisfying:
\begin{align*}
&\begin{array}{rcll}
y(\delta(v)) & \leq&1  \qquad&\text{for all} \quad v \in V\\
y(E(S)) &\leq& (|S| - 1)/2  &\text{for all}\quad S\in \cO \\
y &\geq & 0\,. &
\end{array}
\end{align*}
Thus, a matching $M\in \cM$ is universal if and only if it satisfies the constraints
\begin{equation}\label{eq:face_Muni}
\begin{array}{rcll}
M \cap \delta(v) &=&1  \qquad&\text{for all}\quad v \in W\\
M \cap E(S) &=& (|S| - 1)/2  &\text{for all}\quad S\in \cL \\
M\cap \{e\} &=& 0 &\text{for all}\quad e \in F\,,
\end{array}
\end{equation}
where $W$ is some subset of $V$, $\cL$ is a subset of $\cO$, and $F$
is a subset of $E$. Using an uncrossing argument, as in
\cite[Pages~141-150]{lau2011iterative}, we may assume that the
collection of sets $\cL$ is a laminar family of node sets; i.e., for
any two distinct sets $S,T\in\cL$, either $S\cap T =\varnothing$ or
$S\subseteq T$ or $T\subseteq S$.

\begin{lemma}\label{lemma:inessential-tight}
For every node $v\in V$ there exists $M\in\Muni$ such that $v$ is exposed by $M$. Hence, $W=\varnothing$.
\end{lemma}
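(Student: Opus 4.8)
The plan is to prove the first assertion — that every node $v\in V$ is exposed by some universal matching — since the claim $W=\varnothing$ is then immediate: by the description of the optimal face in~\eqref{eq:face_Muni}, a node belongs to $W$ exactly when it is saturated by every universal matching. The idea is to examine an optimal dual solution of the leastcore linear program~\eqref{eq:matching_leastcore}.

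First I would form the LP dual of~\eqref{eq:matching_leastcore}. Introducing a multiplier $y_M\ge 0$ for each matching $M\in\cM$ and $\lambda\in\R$ for the equation $x(V)=\nu(G)$, the dual is
\begin{align*}
\min\quad & \lambda\,\nu(G)-\sum_{M\in\cM} w(M)\,y_M\\
\text{s.t.}\quad & \sum_{M\in\cM} y_M = 1\\
& \sum_{M\in\cM\,:\; v\in V(M)} y_M \;\le\; \lambda \qquad\text{for all}\quad v\in V\\
& y\ge\0\,.
\end{align*}
By strong duality it has an optimal solution $(y^*,\lambda^*)$ of value $\varepsilon_1$, and at the optimum $\lambda^*=\max_{v\in V}\sum_{M\,:\,v\in V(M)}y^*_M$. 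I would then read $y^*$ as a probability distribution over matchings (it satisfies $\sum_M y^*_M=1$, $y^*\ge\0$) and set $d_v:=\sum_{M\,:\,v\in V(M)}y^*_M\in[0,1]$, the probability that a matching sampled from $y^*$ saturates $v$; thus $\lambda^*=\max_v d_v$.

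Next, fixing the universal allocation $x^*\in P_1(\varepsilon_1)$, I would invoke complementary slackness to extract two facts: (i) every matching $M$ with $y^*_M>0$ is $x^*$-tight, so $\supp(y^*)\subseteq\Muni$ and $w(M)=x^*(M)-\varepsilon_1$ for every such $M$; and (ii) $d_v=\lambda^*$ whenever $x^*(v)>0$. The crux is then to evaluate the average weight $\sum_{M}w(M)\,y^*_M$ twice. On one hand, by (i) and (ii),
\begin{align*}
\sum_{M\in\cM} w(M)\,y^*_M
&\;=\; \sum_{M\in\cM}\bigl(x^*(M)-\varepsilon_1\bigr)\,y^*_M\\
&\;=\; \sum_{v\in V} x^*(v)\,d_v-\varepsilon_1
\;=\; \lambda^*\,\nu(G)-\varepsilon_1\,,
\end{align*}
where the last equality uses $x^*(v)\,d_v=\lambda^*\,x^*(v)$ for every $v$ (trivially so when $x^*(v)=0$) together with $x^*(V)=\nu(G)$. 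On the other hand, $\sum_M w(M)\,y^*_M$ is a convex combination of numbers $w(M)\le\nu(G)$, hence is at most $\nu(G)$. Comparing, $\lambda^*\,\nu(G)-\varepsilon_1\le\nu(G)$, that is $\lambda^*\le 1+\varepsilon_1/\nu(G)$. Since the core is empty we have $\varepsilon_1<0$ and $\nu(G)>0$ (if $\nu(G)=0$ then $w\equiv\0$ and $\varepsilon_1=0$), and therefore $\lambda^*<1$.

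It follows that $d_v\le\lambda^*<1$ for every $v\in V$, hence $\sum_{M\,:\,v\notin V(M)}y^*_M>0$, so some $M\in\supp(y^*)\subseteq\Muni$ exposes $v$; as $v$ was arbitrary, no node is saturated by all universal matchings, which gives $W=\varnothing$. The step I expect to be the real work is the displayed identity: it is complementary slackness that forces $d_v$ to equal $\lambda^*$ precisely on the support of $x^*$, and this is exactly what collapses the weighted average of matching weights to $\lambda^*\nu(G)-\varepsilon_1$; once this is in hand, the elementary fact that an average of matching weights cannot exceed $\nu(G)$ closes the argument.
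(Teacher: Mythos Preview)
Your proof is correct, and it takes a genuinely different route from the paper's. The paper argues by contradiction and perturbation: assuming some node $v$ is saturated by every universal matching, it first shows (again by contradiction, using the empty-core hypothesis) that some node $u$ exposed by a universal matching has $x^*_u>0$, then shifts a small amount of mass from $u$ to $v$ to produce a new leastcore allocation for which a formerly universal matching is no longer tight. Your argument instead passes to the LP dual and reads an optimal dual $y^*$ as a probability distribution over universal matchings; the key inequality $\lambda^*\le 1+\varepsilon_1/\nu(G)<1$ then shows that every node is exposed with positive probability.

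A small remark: the displayed identity $\sum_M w(M)y^*_M=\lambda^*\nu(G)-\varepsilon_1$ that you isolate as ``the real work'' is in fact just strong duality rearranged (the optimal dual value equals $\varepsilon_1$), so you could have invoked it directly rather than rederiving it via complementary slackness. What each approach buys: the paper's argument is entirely elementary and self-contained, needing only convexity and a weight-shift, but it nests two proofs by contradiction. Your approach is cleaner in structure, gives a single probability distribution on $\Muni$ that simultaneously witnesses exposure of every node, and yields the quantitative bound $\max_v d_v\le 1+\varepsilon_1/\nu(G)$, which the paper's argument does not.
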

\begin{proof}
Assume for a contradiction that there exists a node $v\in V$ such that $v\in W$.

First, note that there always exists a non-universal matching $M \in
\cM \setminus \Muni$ since otherwise the empty matching would be
universal, and thus 
\[ 0 = x^*(\varnothing) = w(\varnothing) + \varepsilon_1, \]
implying that the core of the given matching game instance is non-empty. 

Suppose first that there
exists a node $u\in V$ exposed by some matching $M' \in \Muni$ such that $x^*_u>0$. We define
$$
	\delta_0:=\min\{\ex(x^*,M)-\varepsilon_1\,:\quad M \in \cM\setminus \Muni\}\,.
$$
Recall that $\Muni$ is the set of maximum weight matchings on $G$ with
respect to the node weights $w(uv)-x^*(uv)$, $uv\in E$, i.e. $\Muni$ is
the set of $x^*$-tight matchings. Moreover, recall that
$\ex(x^*,M)=\varepsilon_1$ for $M\in \Muni$. Thus, we have
$\delta_0>0$.

We define $\delta:=\min\{\delta_0, x^*_u\}>0$ and a new allocation $x'$ as follows:
$$
x'_r := \begin{cases}
x^*_r + \delta, &\text{if $r=v$} \\
x^*_r - \delta, &\text{if $r=u$} \\
x^*_r, &\text{otherwise}.
\end{cases}$$
Since all universal matchings contain $v$, the excess with respect to $x'$ of any universal
matching is no smaller than their excess with respect to $x^*$. Therefore, by our choice
of $\delta$, $(x',\epsilon_1)$ is a feasible, and hence optimal, solution for
\eqref{eq:matching_leastcore}. But $M'$ is not $x'$-tight, since $M'$ covers $v$ and exposes $u$. This contradicts that $M'$ is a universal matching.

Now consider the other case: for all $u\in V$ if $u$ is exposed by a universal matching then $x^*_u=0$. Then, for every universal matching $M\in \Muni$ we have
$$
\varepsilon_1=\ex(x^*,M)=x^*(V)-w(M)=\nu(G)-w(M).
$$
Since $\nu(G)$ is the maximum weight of a matching on $G$ with respect to the weights $w$, we get that $\varepsilon_1\geq 0$. Thus $x^*$ is in the core, contradicting our assumption that the core is empty.
$\blacksquare$\end{proof}

\subsection{Description of Leastcore.}


We  denote inclusion-wise maximal sets in the family~$\cL$ as
$
	S_1^*,S_2^*,\ldots, S_k^*.
$
We define the edge set $E^+$ to be the set of edges in $G$ such that at most one of its nodes is in $S_i^*$ for every $i\in [k] :=\{1, \dots, k\}$, i.e.
$$
	E^+:=E\setminus \big(\bigcup_{i=1}^{k} E(S_i^*)\big).
$$
\begin{lemma}\label{lemma:expose}
For every choice of $v_i \in S_i^*$, $i \in [k]$, there exists a universal matching $M \in \Muni$ such that the node set covered by $M$ is as follows
$$\bigcup_{i=1}^k S_i^* \backslash\{v_i\}\,.$$
\end{lemma}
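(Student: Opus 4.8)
The plan is to exploit the structure of the optimum face \eqref{eq:face_Muni} describing $\Muni$, together with the laminarity of $\cL$, and to build the desired matching by induction on the laminar family restricted to each maximal set $S_i^*$. First I would recall that, by Lemma~\ref{lemma:inessential-tight}, $W = \varnothing$, so a matching $M$ is universal precisely when it avoids every edge of $F$ and satisfies $|M \cap E(S)| = (|S|-1)/2$ for all $S \in \cL$; in particular every universal matching covers all but exactly one node of each maximal set $S_i^*$, and is a perfect matching on $V \setminus \bigcup_i S_i^*$ extended by near-perfect matchings inside the $S_i^*$. Since the sets $S_i^*$ are vertex-disjoint and the constraints in \eqref{eq:face_Muni} decompose over the laminar family, it suffices to prove the following local statement: for each $i$ and each choice of $v_i \in S_i^*$, there is a matching $M_i \subseteq E(S_i^*) \setminus F$ covering exactly $S_i^* \setminus \{v_i\}$ and satisfying $|M_i \cap E(S)| = (|S|-1)/2$ for every $S \in \cL$ with $S \subseteq S_i^*$. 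Gluing the $M_i$ together with any fixed universal matching's behaviour on $E^+$ (which is forced to be a perfect matching there, independent of the exposed node choices) then yields the global $M \in \Muni$.

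For the local statement I would argue by induction on the laminar subfamily $\cL_i := \{S \in \cL : S \subseteq S_i^*\}$, rooted at $S_i^*$. The key observation is that the face of the matching polytope of $G[S_i^*]$ cut out by the tight blossom constraints of $\cL_i$ and the edge-deletions $F$ is exactly the near-perfect matching polytope of $S_i^*$ with the additional laminar equalities, and this face is nonempty (it contains the restriction of any universal matching). Standard results on the matching polytope (e.g.\ Schrijver~\cite{Sc02}) tell us that on such a face, every node of $S_i^*$ is the unique exposed node of some vertex of the face, i.e.\ for each $v_i \in S_i^*$ there is a near-perfect matching in the face exposing precisely $v_i$. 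Concretely: start from any universal matching $M$, which exposes some $u \in S_i^*$; if $u \neq v_i$, take an $M$-alternating path from $u$ to $v_i$ inside $S_i^*$ and switch along it. The inductive/laminar bookkeeping is needed to verify that the switch stays inside every blossom constraint with equality — an alternating path that starts and ends inside an odd set $S \in \cL_i$ either stays within $S$ or leaves and returns an even number of times, so $|M \cap E(S)|$ is unchanged; this is where I expect to lean on the laminar structure to handle nested blossoms cleanly.

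The main obstacle I anticipate is precisely this last point: showing that the alternating-path switch inside $S_i^*$ does not violate any of the nested tight blossom equalities $|M \cap E(S)| = (|S|-1)/2$ for $S \in \cL$, $S \subsetneq S_i^*$. One has to argue that the endpoints $u, v_i$ of the switching path and the crossings of the path with each nested set $S$ interact with the parity of $|S|$ in the right way — intuitively, a blossom that is ``saturated to capacity'' has exactly one exposed slot, and an alternating path can only move that slot within the blossom, never create or destroy it. Making this rigorous is cleanest via the polyhedral statement (the face is the near-perfect matching polytope of a graph obtained by contracting the maximal proper subsets of $\cL_i$, iterated), from which the existence of a vertex exposing any prescribed node follows from the fact that the (odd) contracted graph has a near-perfect matching missing any given node whenever it has one at all. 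I would present the argument in this contraction language to keep the parity arguments transparent, and only fall back on the explicit alternating-path description as intuition.
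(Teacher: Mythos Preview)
Your decomposition into local pieces $M_i$ to be glued is sound, but there is one genuine error and you are working far harder than necessary on the local step.

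The error: universal matchings are \emph{not} forced to be perfect on $V \setminus \bigcup_i S_i^*$, nor is their behaviour on $E^+$ forced. Since $W = \varnothing$, the face description \eqref{eq:face_Muni} imposes no covering requirement on any vertex, so a universal matching may use zero edges of $E^+$ --- and the one the lemma asks for \emph{must}, since its covered node set is to be exactly $\bigcup_i S_i^* \setminus \{v_i\}$. Your plan to ``glue with a fixed universal matching's behaviour on $E^+$'' would cover vertices outside $\bigcup_i S_i^*$ and contradict the conclusion. The correct gluing is simply $M := \bigcup_i M_i$, nothing added on $E^+$.

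More importantly, you are reinventing something you already have. You invoke Lemma~\ref{lemma:inessential-tight} only to read off $W = \varnothing$, but its full content is that \emph{every} vertex is exposed by some universal matching. Applied to $v_i$, it hands you $M_{v_i} \in \Muni$ with $v_i$ exposed. Because $M_{v_i}$ is universal it already satisfies $|M_{v_i} \cap E(S)| = (|S|-1)/2$ for every $S \in \cL$ with $S \subseteq S_i^*$ and avoids $F$; hence $M_i := M_{v_i} \cap E(S_i^*)$ is a near-perfect matching of $S_i^*$ missing exactly $v_i$ and satisfying all nested blossom equalities. That is your entire ``local statement'', obtained in one line with no alternating paths, no induction on $\cL_i$, and no contraction argument. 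The paper's proof is precisely this: set $M_i = E(S_i^*) \cap M_{v_i}$ and return $M = \bigcup_i M_i$. Your alternating-path route could likely be completed, but the obstacle you single out --- preserving all nested tight equalities under the switch --- is exactly the work Lemma~\ref{lemma:inessential-tight} has already done for you.
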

\begin{proof}
By Lemma~\ref{lemma:inessential-tight}, we know that for every $i \in [k]$ there exists a universal matching $M_{v_i}\in \Muni$ such that $v_i$ is exposed by $M_{v_i}$. Now, for every $i \in [k]$, let us define 
$$
	M_i:=E(S_i^*)\cap M_{v_i}.
$$ 
Since $M_i$ satisfies all laminar family constraints in $\cL$ for subsets of $S_i^*$ we have that
$$
	\bigcup_{i=1}^k M_i
$$
is a matching satisfying all the constraints~\eqref{eq:face_Muni}, and hence is a universal matching covering the desired nodes.
$\blacksquare$\end{proof}

For each $i \in [k]$ fix a unique \emph{representative} node $v_i^* \in S_i^*$. By Lemma~\ref{lemma:expose}, there exists a universal matching $M^*$ covering precisely $\bigcup_{i \in [k]}S_i^*\backslash\{v_i^*\}$. For any $x \in P_1(\varepsilon_1)$ and $S\subseteq V$ we use $\sym(x, S)$ to denote $$\sym(x,S):=x(S) - x^*(S)\,.$$ For single nodes we use the shorthand $\sym(x,v) = \sym(x,\{v\})$. We now prove the following crucial structure result on allocations in the leastcore.

\begin{lemma}\label{lem:leastcore_properties}
For every leastcore allocation $x$, i.e. for every $x\in P_1(\varepsilon_1)$, we have that
	\begin{enumerate}[(i)]
		\item \label{property:equality} for all $i \in [k]$, for all $u\in S_i^*$:$\quad \sym(x,u)=\sym(x,v_i^*),$

		\item \label{property:nonnegativity}for all $e\in E^+$:$\quad \ex(x,e)\geq 0.$
	\end{enumerate}
\end{lemma}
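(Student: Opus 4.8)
The plan is to prove both statements by exploiting the fact that $x^*$ is a universal allocation together with the existence results on universal matchings from Lemmas~\ref{lemma:expose} and~\ref{lemma:inessential-tight}. Fix $x \in P_1(\varepsilon_1)$ and recall the universal matching $M^*$ covering exactly $\bigcup_{i\in[k]} S_i^* \setminus \{v_i^*\}$.

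For part~(\ref{property:equality}), fix $i \in [k]$ and $u \in S_i^*$ with $u \neq v_i^*$ (the case $u = v_i^*$ being trivial). By Lemma~\ref{lemma:expose} there is a universal matching $M_u$ covering exactly $\bigcup_{j\in[k]} S_j^* \setminus \{v_j\}$ where we take $v_i = u$ and $v_j = v_j^*$ for $j \neq i$. So $M^*$ and $M_u$ have the same node set except that $M^*$ exposes $v_i^*$ and covers $u$, while $M_u$ exposes $u$ and covers $v_i^*$. Since both are universal, they are $x$-tight and $x^*$-tight, i.e. $\ex(x,M^*) = \ex(x,M_u) = \ex(x^*,M^*) = \ex(x^*,M_u) = \varepsilon_1$. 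Hence $x(M^*) - x(M_u) = w(M^*) - w(M_u) = x^*(M^*) - x^*(M_u)$. But $x(M^*) - x(M_u) = x(v_i^*) - x(u)$ and likewise for $x^*$, so $x(v_i^*) - x(u) = x^*(v_i^*) - x^*(u)$, which rearranges to $\sym(x,u) = \sym(x,v_i^*)$.

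For part~(\ref{property:nonnegativity}), fix $e = ab \in E^+$. The idea is to find a universal matching $M$ such that $M \cup \{e\}$ is still a matching, so that $M \cup \{e\} \in \cM$ and the leastcore constraint~\eqref{eq:matching_leastcore} gives $\ex(x, M\cup\{e\}) \geq \varepsilon_1 = \ex(x,M)$, which forces $\ex(x,e) = x(e) - w(e) \geq 0$ after telescoping (using that $e$ shares no node with $M$). To build such an $M$: since $e \in E^+$, each endpoint of $e$ lies in at most one maximal set $S_i^*$ and, if so, can be taken as the exposed representative for that set. Concretely, for each $i\in[k]$ with $S_i^* \cap \{a,b\} \neq \varnothing$ pick $v_i$ to be that endpoint (at most one, by definition of $E^+$); for the remaining $i$ pick $v_i$ arbitrarily. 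Lemma~\ref{lemma:expose} yields a universal matching $M$ covering $\bigcup_i S_i^*\setminus\{v_i\}$, which by construction is disjoint from $\{a,b\}$; hence $M \cup \{e\}$ is a matching, and we are done. One edge case to handle: if $a$ or $b$ lies in \emph{no} maximal set of $\cL$, then it is automatically exposed by every universal matching (since $W = \varnothing$ by Lemma~\ref{lemma:inessential-tight}, universal matchings only cover nodes inside the maximal sets), so no special choice is needed there.

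The main obstacle I anticipate is the bookkeeping in part~(\ref{property:nonnegativity}): one must verify carefully that the node set covered by a universal matching is contained in $\bigcup_i S_i^*$ — this uses $W = \varnothing$ from Lemma~\ref{lemma:inessential-tight} and the fact that the maximal sets of $\cL$ partition the "covered" part of $V$ — and that the choice of representatives $v_i$ coming from the two endpoints of $e$ is consistent, which is exactly where the defining property of $E^+$ (at most one endpoint of $e$ in each $S_i^*$) is essential. The telescoping step $\ex(x, M \cup \{e\}) = \ex(x,M) + \ex(x,e)$ is immediate once disjointness is established, since both $x(\cdot)$ and $w(\cdot)$ are additive over the node set / edge set of a matching with an isolated extra edge.
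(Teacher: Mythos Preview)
Your argument is correct and matches the paper's proof almost line for line: both parts invoke Lemma~\ref{lemma:expose} to produce universal matchings with prescribed exposed vertices, then use that universal matchings are simultaneously $x$-tight and $x^*$-tight. Two cosmetic remarks: in~(\ref{property:equality}) the sign of $x(M^*)-x(M_u)$ is flipped (it equals $x(u)-x(v_i^*)$, since $M^*$ covers $u$ while $M_u$ does not), though this is harmless because the same slip is applied to $x^*$ and cancels; and in~(\ref{property:nonnegativity}) you do not need the parenthetical claim that \emph{every} universal matching avoids vertices outside $\bigcup_i S_i^*$ --- the specific $M$ you build via Lemma~\ref{lemma:expose} already has $V(M)\subseteq\bigcup_i S_i^*$, which is all that is used.
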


\begin{proof}
  Consider $u\in S^*_i$, and note that
  we may use Lemma~\ref{lemma:expose} to choose a universal matching
  $M_u$ covering precisely $$S_i^*\backslash\{u\} \cup \bigcup_{j \neq i} S_j^*\backslash\{v_j^*\}.$$ 
 Hence we have $V(M_u) \cup \{u\} = V(M^*) \cup\{v_i^*\}$, and since $M^*$ and $M_u$ are universal, $x(M^*) = x^*(M^*)$ and $x(M_u) = x^*(M_u)$. Using these observations we see that
 $$\sym(x,u) = x(u) + x(M_u) - (x^*(u) - x^*(M_u)) = x(v_i^*) +x(M^*) -(x^*(v_i^*) - x^*(M^*)) = \sym(x,v_i^*).$$
  showing \eqref{property:equality}.

\smallskip

Now we prove~\eqref{property:nonnegativity}. Consider $e\in E^+$ where $e=\{u,v\}$. Since $e \not\in E(S_i^*)$ for all $i \in [k]$, we can choose a universal matching $M$ exposing $u$ and $v$ by Lemma~\ref{lemma:expose}. Thus $M\cup\{e\}$ is also a matching. Notice $M$ is $x$-tight, and so we have
$$
	\ex(x,e)=\underbrace{\ex(x,M\cup\{e\})}_{\geq \varepsilon_1}-\underbrace{\ex(x,M)}_{= \varepsilon_1}\geq 0\
$$
as desired.
$\blacksquare$\end{proof}

\begin{lemma}\label{lem:cardinality-matching}
Let $x \in P_1(\varepsilon_1)$ and let $M \in \cM$ be a matching such that $M \subseteq \bigcup_{i\in[k]}E(S_i^*)$. Then there exists $M' \subseteq M^*$ such that $\ex(x,M') \leq \ex(x,M)$ and for all $i \in [k]$, $|M' \cap E(S_i^*)| = |M\cap E(S_i^*)|$.
\end{lemma}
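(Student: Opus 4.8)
The plan is to reduce the statement to a single blossom and then to a purely combinatorial selection problem. Since $\cL$ is laminar, its inclusion-wise maximal members $S_1^*,\dots,S_k^*$ are pairwise disjoint, so the hypothesis $M\subseteq\bigcup_i E(S_i^*)$ forces $M=\bigsqcup_{i\in[k]}M_i$ with $M_i:=M\cap E(S_i^*)$. Writing $M^*_i:=M^*\cap E(S_i^*)$ (a near-perfect matching on $S_i^*$ exposing only $v_i^*$, since $M^*$ covers exactly $\bigcup_i S_i^*\setminus\{v_i^*\}$), it suffices to produce, for each $i$, a subset $M_i'\subseteq M^*_i$ with $|M_i'|=|M_i|=:t_i$ and $\ex(x,M_i')\le\ex(x,M_i)$: then $M':=\bigsqcup_i M_i'\subseteq M^*$ works, because $x(\cdot)$ and $w(\cdot)$ are additive over vertex-disjoint matchings and $M'\cap E(S_i^*)=M_i'$.

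Fix $i$ and abbreviate $S:=S_i^*$, $t:=t_i$. The first key step uses Lemma~\ref{lem:leastcore_properties}\eqref{property:equality}: every node of $S$ has the same value $\sym(x,\cdot)=\sym(x,v_i^*)=:c$, so for any matching $N$ inside $S$ we have $x(N)=x^*(N)+2|N|c$ and hence $\ex(x,N)=\ex(x^*,N)+2|N|c$. Consequently, for matchings $N_1,N_2$ inside $S$ with $|N_1|=|N_2|$ we get $\ex(x,N_1)-\ex(x,N_2)=\ex(x^*,N_1)-\ex(x^*,N_2)$, so it is enough to find $M_i'\subseteq M^*_i$ with $|M_i'|=t$ and $\ex(x^*,M_i')\le\ex(x^*,M_i)$.

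The second key step is LP duality for maximum-weight matching with edge weights $c_{uv}:=w(uv)-x^*(uv)=-\ex(x^*,uv)$, for which $M^*$ is optimal and $\Muni$ is exactly the optimal face described by the laminar system \eqref{eq:face_Muni}. Fix an optimal dual solution $(y,z)$; by complementary slackness together with Lemma~\ref{lemma:inessential-tight} (since $W=\varnothing$, no $y_v$ can be positive) we may take $y\equiv 0$ and $z$ supported on $\cL$, with $c_{uv}=\sum_{T\in\cL:\,u,v\in T}z_T$ for $uv\in M^*$ and $c_{uv}\le\sum_{T\in\cL:\,u,v\in T}z_T$ for every edge. Because $S$ is maximal in $\cL$, summing the dual inequalities over the edges of any matching $N$ inside $S$ yields $-\ex(x^*,N)=\sum_{uv\in N}c_{uv}\le t\,z_S+\sum_{T\in\cL,\,T\subsetneq S}z_T\,|N\cap E(T)|$, with equality whenever $N\subseteq M^*_i$. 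Since $z\ge 0$, it therefore suffices to choose $M_i'\subseteq M^*_i$ with $|M_i'|=t$ and $|M_i'\cap E(T)|\ge|M_i\cap E(T)|$ for every $T\in\cL$ with $T\subsetneq S$.

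Finally, this last, purely combinatorial statement is proved by a top-down recursion on the laminar tree rooted at $S$. Set $n(T):=|M_i\cap E(T)|$ and $p(T):=|M^*_i\cap E(T)|=(|T|-1)/2$; note $n(T)\le p(T)$ for all $T$, that $n$ is sub-additive over children ($\sum_j n(T_j)\le n(T)$, as the $T_j$ are disjoint subsets of $T$), and that $n(S)=t\le p(S)$. Maintaining the invariant that a node $T$ receives a budget $b(T)\in[n(T),p(T)]$ (starting from $b(S)=t$), at a node $T$ with children $T_1,\dots,T_m$ in $\cL$ the bounds $\sum_j n(T_j)\le b(T)\le p(T)$ let us choose budgets $b(T_j)\in[n(T_j),p(T_j)]$ summing to some integer in $[\,b(T)-(p(T)-\sum_j p(T_j)),\,b(T)\,]$; assigning $b(T_j)$ edges of $M^*_i$ inside each child and the remaining $b(T)-\sum_j b(T_j)\ge 0$ edges from the "direct" edges of $M^*_i\cap E(T)$, then recursing into each child, produces $M_i'$ with $|M_i'\cap E(T)|=b(T)\ge n(T)$ for every $T$ in the tree. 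I expect the main obstacle to be exactly this recursion: one must verify at every node that the displayed interval of feasible child-budget sums is nonempty, which is where the sub-additivity of $n$ and the bound $t\le p(S)$ — both coming from the near-perfectness of $M^*$ on $S$ and the fact that $M$ is a matching — get used.
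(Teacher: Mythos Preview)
Your argument is correct, and it takes a genuinely different route from the paper's proof. Both proofs begin by reducing to a single maximal set $S=S_i^*$, but they then diverge. The paper works directly with the weights $w-x$ for the given $x\in P_1(\varepsilon_1)$ and with the \emph{cardinality-constrained} matching polytope $P_t(G[S])$: it shows that $M^*\cap E(S)$ is optimal there with $t=(|S|-1)/2$, obtains a dual with $y=0$ via Lemma~\ref{lemma:inessential-tight}, and then applies an auxiliary edge-removal lemma (for each $t$ one can drop a single edge of the current optimal matching and update the dual so that the remaining matching is optimal in $P_{t-1}$), iterating down to $|M\cap E(S)|$. You instead first use Lemma~\ref{lem:leastcore_properties}\eqref{property:equality} to replace $x$ by $x^*$ at no cost when comparing matchings of equal size inside $S$, then apply standard matching duality (no cardinality constraint) with weights $w-x^*$ and the known laminar dual supported on $\cL$, and finally reduce to a pure laminar selection problem which you solve by a top-down budget distribution.

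Both routes are short. The paper's approach has the virtue of never invoking Lemma~\ref{lem:leastcore_properties} and of producing a one-edge-at-a-time construction that is natural from the LP viewpoint. Your approach avoids introducing the auxiliary cardinality-constrained polytope and its dual entirely, and isolates cleanly the purely combinatorial content: find $M_i'\subseteq M_i^*$ with $|M_i'\cap E(T)|\ge|M_i\cap E(T)|$ for all $T\in\cL$, $T\subseteq S$. Your feasibility check for the recursion is the key point, and it is sound: at a node $T$ with children $T_1,\dots,T_m$, the interval $[\sum_j n(T_j),\sum_j p(T_j)]$ meets $[b(T)-(p(T)-\sum_j p(T_j)),\,b(T)]$ because $\sum_j n(T_j)\le n(T)\le b(T)$ and $b(T)\le p(T)$; integer budgets summing to any value in the intersection exist by a greedy assignment. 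One small remark: you should also note that $p(T)-\sum_j p(T_j)\ge 0$ (equivalently, there are enough ``direct'' edges of $M_i^*$ at level $T$), which follows since $M_i^*$ satisfies $|M_i^*\cap E(T')|=(|T'|-1)/2$ for every $T'\in\cL$ with $T'\subseteq S$.
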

\begin{proof}
 See Appendix~\ref{appendix-proofs}.
$\blacksquare$\end{proof}
\bigskip
Recall that $x^*$ is a fixed universal allocation in
$P_1(\varepsilon_1)$. Let $E^*\subseteq E$ denote the union of universal matchings, i.e. $E^* = \cup_{M \in \Muni} M$. We now define linear
program~\eqref{eq:leastcore_compact}. 
\begin{align}
\max \quad & \varepsilon \tag{$\overline{P}_1$}\label{eq:leastcore_compact}\\
\text{s.t.} \quad & \sym(x,u) =\sym(x,v_i^*)  && \text{for all} \quad
                                                  u \in S_i^*,\, i
                                                  \in [k] \label{constr:symmetry} \\
& \ex(x,e) \leq 0 && \text{for all }\quad e \in E^* \notag\\
& \ex(x,e) \geq 0 && \text{for all}\quad e\in E^+\notag \\
& \ex(x,M^*) = \varepsilon \notag \\
& x(V)=\nu(G) \notag \\
& x \geq 0\,. \notag
\end{align}
Let $\overline{\varepsilon}_1$ be the optimal value of the linear
program~\eqref{eq:leastcore_compact}. 
We now show that
$\overline{P_1}(\overline{\varepsilon}_1)$ is indeed a compact description of the
leastcore $P_1(\varepsilon_1)$.

\begin{theorem}\label{thm:compact_leastcore}
We have $\varepsilon_1=\overline{\varepsilon}_1$ and $P_1(\varepsilon_1)=\overline{P}_1(\overline{\varepsilon}_1)$.
\end{theorem}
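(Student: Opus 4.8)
The plan is to prove the two claimed equalities by showing mutual containment of the feasible regions, with the optimal-value equality following along the way. First I would establish $P_1(\varepsilon_1) \subseteq \overline{P}_1(\varepsilon_1)$, i.e. that every leastcore allocation $x$ satisfies all the constraints of~\eqref{eq:leastcore_compact} with $\varepsilon = \varepsilon_1$. The symmetry constraints~\eqref{constr:symmetry} are exactly Lemma~\ref{lem:leastcore_properties}\eqref{property:equality}; the constraints $\ex(x,e) \geq 0$ for $e \in E^+$ are exactly Lemma~\ref{lem:leastcore_properties}\eqref{property:nonnegativity}; the constraint $\ex(x,e) \leq 0$ for $e \in E^*$ holds because any such $e$ lies in some universal matching $M$, and $x$-tightness of $M$ together with the least-core inequality $\ex(x, M \setminus \{e\}) \geq \varepsilon_1$ forces $\ex(x,e) = \ex(x,M) - \ex(x, M\setminus\{e\}) \leq \varepsilon_1 - \varepsilon_1 = 0$ (or more simply, $x(e) \le w(e)$ would follow from edge-coalition constraints directly, but phrasing it via universal matchings keeps it self-contained); $\ex(x, M^*) = \varepsilon_1$ holds since $M^*$ is a universal matching; and $x(V) = \nu(G)$, $x \geq 0$ are inherited verbatim. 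This shows $x \in \overline{P}_1(\varepsilon_1)$, hence $\overline{\varepsilon}_1 \geq \varepsilon_1$.

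Next I would prove the reverse containment $\overline{P}_1(\overline{\varepsilon}_1) \subseteq P_1(\varepsilon_1)$. Fix $x$ feasible for~\eqref{eq:leastcore_compact} with value $\overline{\varepsilon}_1$; I must show $(x, \overline{\varepsilon}_1)$ is feasible for~\eqref{eq:matching_leastcore}, i.e. $\ex(x, M) \geq \overline{\varepsilon}_1$ for every matching $M \in \cM$, and also that $\overline{\varepsilon}_1 \le \varepsilon_1$ (which combined with the previous paragraph gives equality). The key idea is to decompose an arbitrary matching $M$ along the laminar structure: write $M = M_{\mathrm{in}} \cup M_{\mathrm{out}}$ where $M_{\mathrm{in}} = M \cap \bigcup_{i\in[k]} E(S_i^*)$ and $M_{\mathrm{out}} = M \cap E^+$ — note every edge of $M$ not inside some maximal set $S_i^*$ lies in $E^+$ by definition of $E^+$, but I should be careful: an edge could have exactly one endpoint in $S_i^*$, which is precisely an $E^+$ edge, so this split is exhaustive. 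For the outside part, each edge $e \in M_{\mathrm{out}} \cap E^+$ has $\ex(x,e) \geq 0$, so these edges only help. For the inside part, apply Lemma~\ref{lem:cardinality-matching} to $M_{\mathrm{in}}$ to obtain $M' \subseteq M^*$ with $\ex(x, M') \leq \ex(x, M_{\mathrm{in}})$ and matching cardinalities inside each $S_i^*$. Then I would bound $\ex(x, M) = \ex(x, M_{\mathrm{in}}) + \ex(x, M_{\mathrm{out}}) \geq \ex(x, M') + 0 \geq \ex(x, M') \geq \ex(x, M^*) = \overline{\varepsilon}_1$, where the last inequality uses that $M' \subseteq M^*$ and every edge of $M^* \setminus M' \subseteq E^*$ has $\ex(x,e) \leq 0$, so removing those edges from $M^*$ can only decrease excess — wait, I need the inequality the other way: $\ex(x, M') = \ex(x,M^*) - \ex(x, M^* \setminus M') \geq \ex(x,M^*)$ since $\ex(x, M^*\setminus M') \le 0$. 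Good. Hence $\ex(x,M) \geq \overline{\varepsilon}_1$ for all $M$, so $(x,\overline{\varepsilon}_1)$ is feasible for~\eqref{eq:matching_leastcore}, giving $\overline{\varepsilon}_1 \leq \varepsilon_1$. Combined with $\overline{\varepsilon}_1 \geq \varepsilon_1$ from before, we get $\overline{\varepsilon}_1 = \varepsilon_1$, and then both containments become $P_1(\varepsilon_1) = \overline{P}_1(\overline{\varepsilon}_1)$.

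One subtlety I would need to check carefully is the endpoint bookkeeping when $M_{\mathrm{in}}$ and $M_{\mathrm{out}}$ interact at the boundary nodes of the $S_i^*$: since $\cL$'s maximal sets $S_1^*, \dots, S_k^*$ are disjoint (they are the maximal members of a laminar family), every edge of $M$ is either contained in a single $E(S_i^*)$ or lies in $E^+$, and the node-excess decomposition $\ex(x, M) = \sum_{e \in M} \ex(x,e)$ only works because $M$ is a matching so $V(M)$ is partitioned by the edges — I should state $\ex(x, M) = \ex(x, M_{\mathrm{in}}) + \ex(x, M_{\mathrm{out}})$ as an exact identity using disjointness of vertex supports, which is fine. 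I expect the main obstacle to be invoking Lemma~\ref{lem:cardinality-matching} with exactly the right hypotheses — it requires $M_{\mathrm{in}} \subseteq \bigcup_{i} E(S_i^*)$, which holds by construction — and then correctly threading the chain of inequalities so that the signs of $\ex(x,e)$ on $E^*$ (non-positive) and $E^+$ (non-negative) are used in the right direction; the rest is essentially a direct assembly of the already-proved lemmas.
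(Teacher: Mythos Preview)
Your overall architecture matches the paper's proof: first $P_1(\varepsilon_1) \subseteq \overline{P}_1(\varepsilon_1)$ via Lemma~\ref{lem:leastcore_properties} and universality of $M^*$, then $\overline{P}_1(\overline{\varepsilon}_1) \subseteq P_1(\overline{\varepsilon}_1)$ by verifying $\ex(x,M) \geq \overline{\varepsilon}_1$ for every matching, stripping off the $E^+$ edges and comparing the remainder to a submatching of $M^*$. The forward inclusion is fine and essentially identical to the paper's.

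The gap is in the reverse inclusion, precisely at the point where you invoke Lemma~\ref{lem:cardinality-matching}. That lemma is stated for an allocation $x \in P_1(\varepsilon_1)$, but at this stage you only know $x \in \overline{P}_1(\overline{\varepsilon}_1)$; membership in $P_1(\varepsilon_1)$ is the very conclusion you are working toward, so invoking the lemma with this $x$ is circular. You carefully checked the hypothesis on $M_{\mathrm{in}}$ but overlooked the hypothesis on the allocation. The paper resolves this by applying Lemma~\ref{lem:cardinality-matching} to the \emph{universal} allocation $x^*$ (which is in $P_1(\varepsilon_1)$ by construction), obtaining $M' \subseteq M^*$ with $\ex(x^*, M') \leq \ex(x^*, M_{\mathrm{in}})$ and matching cardinalities $t_i$ in each $S_i^*$, and then transferring the inequality from $x^*$ to $x$ using the symmetry constraints~\eqref{constr:symmetry}: since $x(u) - x^*(u)$ is constant on each $S_i^*$, and $M_{\mathrm{in}}$ and $M'$ have the same number $t_i$ of edges inside each $S_i^*$, one gets
\[
\ex(x,M_{\mathrm{in}}) - \ex(x^*,M_{\mathrm{in}}) \;=\; \sum_{i=1}^{k} 2t_i\,\sym(x,v_i^*) \;=\; \ex(x,M') - \ex(x^*,M'),
\]
whence $\ex(x,M') \leq \ex(x,M_{\mathrm{in}})$ follows. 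Notice that in your write-up the symmetry constraints~\eqref{constr:symmetry} play no role whatsoever in the reverse direction; that is a tell that something is missing, because they are precisely what bridges this circularity in the paper's argument.
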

\begin{proof}
First, we show that $P_1(\varepsilon_1)\subseteq
\overline{P}_1(\varepsilon_1)$. Consider $x\in P_1(\varepsilon_1)$. By Lemma~\ref{lem:leastcore_properties}\eqref{property:equality} we have
$$
	\sym(x,u) =\sym(x,v_i^*)  \qquad\text{for all} \quad u \in S_i^*,\, i \in [k].
$$
Lemma~\ref{lem:leastcore_properties}\eqref{property:nonnegativity}
shows that
$\ex(x,e) \geq 0$ for all $e\in E^+$, and
$\ex(x,M^*)=\varepsilon_1\,$ holds by
the universality of $M^*$.
It remains to show that 
$$
\ex(x,e) \leq 0 \quad\text{for all }\quad e \in E^*.
$$
Suppose for contradiction there exists $e \in E^*$ such that $\ex(x,e) > 0$. By the definition of $E^*$, there exists a universal matching $M'$ containing $e$. Since $M'$ is universal, $\ex(x,M') = \varepsilon_1$. But by our choice of $e$,
$$\ex(x,M'\setminus\{e\})<\ex(x,M')= \varepsilon_1$$
contradicting that $x$ is in $P_1(\varepsilon_1)$. Thus we showed that $(x,\varepsilon_1)$ is feasible for~\eqref{eq:leastcore_compact}, i.e. we showed that $P_1(\varepsilon_1)\subseteq \overline{P}_1(\varepsilon_1)$.

To complete the proof we show that $\overline{P}_1(\overline{\varepsilon}_1)\subseteq P_1(\overline{\varepsilon}_1)$. Let $x$ be an allocation in $\overline{P}_1(\overline{\varepsilon}_1)$. Due to the description of the linear program~\eqref{eq:matching_leastcore}, it is enough to show that for every matching $M\in \cM$ we have
$$
	\ex(x,M)\geq \overline{\varepsilon}_1\,.
$$
Since 
$
\ex(x,e) \geq 0 \text{ for all} e\in E^+,
$
it suffices to consider only the matchings $M$, which are unions of matchings on the graphs $G[S_i^*]$, $i \in [k]$. Let $t_i:=|M\cap E(S_i^*)|$. By Lemma~\ref{lem:cardinality-matching} applied to $x^*$ there exists $M' \subseteq M^*$ such that $\ex(x^*,M) \geq \ex(x^*, M')$ and $|M'\cap E(S_i^*)|=t_i$, for all $i\in [k]$. Then due to constraints (\ref{constr:symmetry}) in~\eqref{eq:leastcore_compact}
we have
$$\ex(x,M) = \underbrace{\sum_{i=1}^k 2t_i\sym(x,v_i^*)}_{=\sym(x,M')} + \underbrace{\ex(x^*, M)}_{\geq \ex(x^*,M')} \geq \ex(x,M') \geq \ex(x,M^*) = \overline{\varepsilon}_1\,,$$
where the last inequality follows since $M' \subseteq M^*$ and $\ex(x,e) \leq 0$ for all $e \in E^*$.
\bigskip

Thus, we showed that $P_1(\varepsilon_1)\subseteq \overline{P}_1(\varepsilon_1)$ and $\overline{P}_1(\overline{\varepsilon}_1)\subseteq P_1(\overline{\varepsilon}_1)$. Recall, that $\varepsilon_1$ and $\overline{\varepsilon}_1$ are the optimal values of the linear programs~\eqref{eq:matching_leastcore} and~\eqref{eq:leastcore_compact} respectively. Thus, we have $\varepsilon_1=\overline{\varepsilon}_1$ and 
$P_1(\varepsilon_1)=\overline{P}_1(\overline{\varepsilon}_1)$.
$\blacksquare$\end{proof}

\section{Computing the Nucleolus} \label{sec:maschler}

The last section presented a polynomial-size formulation for the
leastcore LP (\ref{eq:matching_leastcore}). 
In this section we complete our polynomial-time implementation of
Maschler's scheme by showing that
\eqref{eq:formulation_Maschler} has the following compact
reformulation:

\begin{align*}
&\max \quad \varepsilon \tag{$\overline{P}_j$}\label{eq:simple_formulation_Maschler}\\
&\begin{array}{rcll}
\text{s.t.}\quad
\ex(x,e) &\geq& \varepsilon-\varepsilon_1 &\text{for all}\quad e\in E^+,\, e\not\in  \fix(\overline{P}_{j-1}(\overline{\varepsilon}_{j-1}))\\
x(v)&\geq&\varepsilon-\varepsilon_1 &\text{for all}\quad v\in V,\, v\not\in \fix(\overline{P}_{j-1}(\overline{\varepsilon}_{j-1}))\\
\ex(x,e)&\leq & \varepsilon_1-\varepsilon  \qquad&\text{for all} \quad e \in E^*,\, e\not\in  \fix(\overline{P}_{j-1}(\overline{\varepsilon}_{j-1}))\\
x &\in & \overline{P}_{j-1}(\overline{\varepsilon}_{j-1})\,, &
\end{array}
\end{align*}
where $\overline{\varepsilon}_j$ is the optimal value of the linear program~\eqref{eq:simple_formulation_Maschler}
\begin{theorem}\label{thm:nucleolus_simple}
For all $j=1,\ldots,j^*$, we have $\varepsilon_j=\overline{\varepsilon}_j$ and $P_j(\varepsilon_j) = \overline{P}_j(\overline{\varepsilon}_j)$.
\end{theorem}

\begin{proof}
We proceed by induction. For $j=1$ the statement holds due to Theorem~\ref{thm:compact_leastcore}. Let us show that statement holds for each $j=2,\dots, j^*$, assuming that the statement holds for $j-1$. By induction, $P_{j-1}(\varepsilon_{j-1}) = \overline{P}_{j-1}(\overline{\varepsilon}_{j-1})$. We let $\cF := \fix(P_{j-1}(\varepsilon_{j-1})) = \fix(\overline{P}_{j-1}(\overline{\varepsilon}_{j-1}))$ to ease presentation. 

First we show $P_j(\varepsilon_j) \subseteq \overline{P}_j(\varepsilon_j)$. Let $x \in P_j(\varepsilon_j)$. First consider an edge $e \in E^+\backslash\cF$. By Lemma~\ref{lemma:expose} there exists a universal matching $M \in \Muni$ exposing the endpoints of $e$. Moreover, since $V(M) \in \fix(P_1(\varepsilon_1)) \subseteq \cF$ we have $V(M\cup \{e\}) \not\in \cF$. Thus we see that
$$\ex(x,e) = \underbrace{\ex(x,M\cup\{e\})}_{\geq
    \varepsilon_j}-\underbrace{\ex(x,M)}_{=\varepsilon_1}\geq\varepsilon_j-\varepsilon_1.$$
Now consider $v \in V \backslash \cF$. Similar to the previous argument, but via Lemma~\ref{lemma:inessential-tight}, we can find a universal matching $M$ exposing $v$. Thus, since $M$ is universal, $V(M) \cup \{v\} \not\in \cF$. We also have $\nu(V(M) \cup\{v\}) \geq w(M)$, so
$$x(v) = x(M\cup\{v\}) - w(M)- \ex(x,M) \geq \varepsilon_j - \varepsilon_1.$$ Finally consider $e=\{u,v\} \in E^*\backslash \cF$. Since $e \in E^*$ there exists a universal matching $M \in \Muni$ such that $e$ is in $M$. Since $V(M) \in \fix(P_1(\varepsilon_1)) \subseteq \cF$ and $\{u,v\} \not\in \cF$, we see that $M \setminus \{e\} \not\in \cF$. Thus we have
$$\ex(x,e) = \ex(x,M) - \ex(x,M\backslash\{e\}) \leq \varepsilon_1 -  \varepsilon_j$$
as desired.

So we have shown $P_j(\varepsilon_j) \subseteq \overline{P}_j(\varepsilon_j)$, and thus it remains to show $\overline{P}_j(\overline{\varepsilon}_j) \subseteq P_j(\overline{\varepsilon}_j)$. Let $x \in \overline{P}_j(\overline{\varepsilon}_j)$ and let $S \subset V$ such that $S \not\in \cF$. We need to show
$$x(S) - \nu(S) \geq \overline{\varepsilon}_j.$$
Since $S \not\in \cF$, there exists $v \in S$ such that $\{v\} \not\in \cF$. Let $M$ be a maximum $w$-weight matching on $G[S]$, i.e. $w(M) = \nu(S)$. Either
$v \in S \backslash V(M)$ or $v \in V(M)$. We proceed by case
distinction.

\underline{{\bf Case 1:} $v \in S\backslash V(M)$.} Since $x \geq 0$ we have
$x(S) \geq x(V(M) \cup \{v\})$, and it suffices to prove that the
right-hand side exceeds the weight of $M$ by at least
$\overline{\varepsilon}_j$. 
By assumption,  $x \in \overline{P}_j(\overline{\varepsilon}_j) \subseteq
\overline{P}_1(\overline{\varepsilon}_1) = P_1(\varepsilon_1)$, and hence
\begin{equation}\label{eq:M}
  \ex(x,M) \geq \epsilon_1.
\end{equation}
Together with  $x(v)\geq \overline{\varepsilon}_j-\varepsilon_1$ this yields the desired inequality. 

\underline{{\bf Case 2:} $v \in V(M)$.}  The same argument as before applies if
there is $u \in S\backslash V(M)$ with $\{u\} \not\in
\cF$. Therefore, we focus on the case where
$\{u\} \in \cF$ for all $u \in
S\backslash V(M)$. Then $M$ contains an
edge $f$ such that $f \not\in \cF$. Again
using $x\geq 0$, it suffices to show that $x(V(M)) \geq \overline{\epsilon}_j$.
We further distinguish
cases: either $f \in M \cap E^+$ or $f \in M\backslash E^+$.

{\bf Case 2a}: $f \in M \cap E^+$. Here, the desired inequality follows from $\ex(x,M\backslash\{f\})\geq \varepsilon_1$ due to $x \in P_1(\varepsilon_1)$ and from
$\ex(x,f)\geq \overline{\varepsilon}_j-\varepsilon_1$.

{\bf Case 2b}: $f \in M \backslash E^+$. If $M \cap E^+$ has an edge
that is not in $\cF$ then we use the same
argument as in Case 2a. So we may assume that all of the edges in
$M \cap E^+$ are in $\cF$. Now recall that $x \in P_1(\varepsilon_1)$, and
hence $\ex(x,e) \geq 0$ for all $e \in E^+$ by Lemma
\ref{lem:leastcore_properties}.\eqref{property:nonnegativity}. Thus, we may assume that $V(M)\not\in \cF$ and $M\cap E^+=\varnothing$. 

For each $i \in [k]$ choose some node $u_i \in S_i^*$ exposed by $M$. By Lemma~\ref{lemma:expose} there exists $\hat{M} \in \Muni$ such that $\hat{M}$ exposes precisely $u_1, \dots, u_k$ on $\cL$. Thus $V(M) \subseteq V(\hat{M})$ and we see 
$$x(M) - x(\hat{M}) = x(V(\hat{M})\backslash V(M)) = \sum_{i=1}^k x((V(\hat{M})\backslash V(M)) \cap S_i^*).$$
Since $V(M) \not\in \cF$, and $V(\hat{M}) \in \cF$ this equation implies that there exists $i\in [k]$ such that $|M \cap E(S_i^*)| < \frac{|S_i^*| - 1}{2}$ and $(V(\hat{M})\backslash V( M)) \cap S_i^* \not\in\cF$. So there exists $v \in V(\hat{M}) \cap S_i^*$ such that $\{v\} \not\in \cF$. Since $\sym(x,v) = \sym(x,v_i^*)$, we have that $\{v_i^*\}\not\in\cF$. By Lemma~\ref{lem:cardinality-matching}, applied to $x$, there exists $M' \subseteq M^*$ such that $|M' \cap E(S_i^*)| = |M\cap E(S_i^*)| < |M^* \cap E(S_i^*)|$ and $\ex(x,M) \geq \ex(x,M')$. Note, that since $|M' \cap E(S_i^*)| = |M\cap E(S_i^*)|$, $V(M)\subseteq \cup_{i=1}^k S^*_i$ and $V(M)\not\in \cF$, we have $V(M')\not\in \cF$. Choose $e \in M^* \backslash M'$, $e\not\in \cF$, then we have $e \in M^* \backslash M' \subseteq E^*$. Hence,
$$\ex(x,M)\geq \ex(x,M') = \ex(x,M^*)-\ex(x,e) -\ex(x,M^* \backslash (M'\cup\{e\}))\geq \varepsilon_1 -(\varepsilon_1 - \overline{\varepsilon}_j) = \overline{\varepsilon}_j,$$
where the last inequality follows since $\ex(x,M^* \backslash (M'\cup\{e\}))\leq 0$ and  $x \in \overline{P}_{j}(\overline{\varepsilon}_j)$, $e \in E^*\backslash\cF$. 
$\blacksquare$\end{proof} \bigskip With Theorem
\ref{thm:nucleolus_simple} we can replace each linear program~\eqref{eq:formulation_Maschler}
with~\eqref{eq:simple_formulation_Maschler} in Maschler's Scheme. Since the universal
allocation $x^*$, the node sets $S^*_i$, $i\in[k]$, the edge set $E^+$, and the edge set $E^*$ can all be computed in polynomial time, we have shown that the nucleolus of any cooperative matching game with empty core can be computed in polynomial time. Therefore we have shown Theorem~\ref{th:main-result}. 

\paragraph{Open Questions} 

Matching Games generalize naturally to $b$-matching games, where instead the underlying optimization problem is to find an edge subset $M$ with $|M\cap \delta(v)| \leq b_v$ for each node $v$. Biro, Kern, Paulusma, and Wojuteczky~\cite{Biro2017} showed that the core-separation problem when $b_v > 2$ for some vertex $v$, is coNP-Hard. Despite this, the complexity of computing the nucleolus of these games is open.


Our algorithm relies heavily on the ellipsoid method. When the core is non-empty, there is a combinatorial algorithm for finding the nucleolus~\cite{Biro2012}. It would be interesting to develop a combinatorial algorithm for nucleolus computation of matching games in general.

\paragraph*{Acknowledgements.}

The authors thank Umang Bhaskar, Daniel Dadush, and Linda Farczadi for stimulating and insightful discussions related to this paper.
\newpage

\bibliography{references}
\bibliographystyle{splncs04}

\newpage

\appendix
\section*{Appendix}

\section{Omitted Proofs}\label{appendix-proofs}
\begin{proof}
(Of Lemma~\ref{lemma:universal-allocations-exist})
Let $\cM^x$ denote the set of $x$-tight matchings for some $x \in P_1(\varepsilon_1)$. That is $\cM^x:=\{ M \in \cM : \ex(x,M) = \varepsilon_1\}$. We claim that there exists $x^* \in P_1(\varepsilon_1)$ such that 
$$\cM^{x^*} = \bigcap_{x \in P_1(\varepsilon_1)} \cM^x\,.$$
 Let $x^* \in P_1(\varepsilon_1)$ be chosen to minimize $|\cM^{x^*}|$. Suppose for a contradiction that $x^*$ is not universal. Then there exists $x \in P_1(\varepsilon_1)$ and $M \in \cM^{x^*}\backslash \cM^{x}$. Let $\bar{x} := \frac{1}{2}(x^* + x)$. Since $P_1(\varepsilon_1)$ is a convex set, $\bar{x} \in P_1(\varepsilon_1)$. Furthermore, $\cM^{\bar{x}} = \cM^{x^*} \cap \cM^{x}$, thus $\cM^{\bar{x}} \subseteq \cM^{x^*} \backslash \{M\}$ contradicting the minimality of $|\cM^{x^*}|$.
$\blacksquare$
\end{proof}

\paragraph{Optimal Matchings of Restricted Cardinality}
 It is well-known\cite{Sc02}, that for any $t \in \N$ and for any graph $H$, $\conv\{M \in \cM(H) : |M| = t\}$ has the following linear description:
\begin{align*}
   P_t(H) := \{ x \in \R^E &: \\
    \ &x(\delta(v)) \leq 1 &\text{for all}\quad  v \in V(H) \\
    \ &x(E(U)) \leq \frac{|U| - 1}{2} &\text{for all}\quad U \in \cO(H) \\
    \ &x(E(H)) =t \\
    \ &x\geq 0\}.
\end{align*}
For any given $c \in R^{E(H)}$ we denote by $P_t^c(H)$ the set of vertices $x$ of the above polytope maximizing $c^Tx$, i.e. the optimal solutions to the linear program $\max\{c^Tx : x \in P_t(H)\}$. The dual to this linear optimization problem is to minimize
$$\sum_{v\in V(H)} y_v + \sum_{U \in \cO(H)} \frac{|U|-1}{2}z_U + t\gamma $$
where $(y,z,\gamma)$ is in $D_{t,c}(H)$ defined as follows:
\begin{align*}
    D_{t,c}(H) := \{(y,z,\gamma) \in \R^{V(H)} \times \R^{\cO(H)} \times \R &:\\
    \ &y_u + y_v + \gamma + \sum_{U \in \cO : uv \subseteq U} z_U \geq c(uv) &\text{for all}\quad uv \in E(H)\\
    \ &y,z \geq 0\}.
\end{align*}
Let $D_t^c(H)$ denote the set of optimal solutions to $\min\{\sum_{v\in V(H)} y_v + \sum_{U \in \cO(H)} \frac{|U|-1}{2}z_U + t\gamma: (y,z,\gamma)\in D_{t,c}(H)\}.$ 
\begin{lemma}\label{lemma:cardinality-matching}
Suppose $2 \leq t \leq \floor{\frac{|V(H)|}{2}}$. Let $x \in P_t^c(H)$ and $(y,z,\gamma) \in D_t^c(H)$. If the support of $z$ is laminar and $y=0$ then there exists $e \in \supp(x)$ such that
$$x - \chi(e) \in P_{t-1}^c(H)$$
and
there exists $(0,z',\gamma') \in D_{t-1}^c(H)$ with the support of $z'$ laminar.
\end{lemma}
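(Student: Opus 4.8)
\textbf{Proof plan for Lemma~\ref{lemma:cardinality-matching}.}

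The plan is to use complementary slackness together with the Edmonds--Gallai/blossom structure of the optimal face, and to locate a ``safe'' edge to delete inside an innermost maximal blossom of $\supp(z)$. First I would record the complementary slackness conditions for the optimal pair $(x,(0,z,\gamma))$: every edge $uv\in\supp(x)$ is tight, i.e. $\gamma+\sum_{U\ni u,v}z_U=c(uv)$; every $v$ with $x(\delta(v))<1$ contributes $y_v=0$ (automatic here, since $y=0$), and, most importantly, every $U\in\supp(z)$ is \emph{tight for $x$}, i.e. $x(E(U))=\frac{|U|-1}{2}$. Thus inside each maximal set $U$ of the laminar family $\supp(z)$, the matching described by $\supp(x)$ is near-perfect: it exposes exactly one node of $U$. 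This is exactly the structure exploited in the main text (Lemma~\ref{lem:cardinality-matching}, Lemma~\ref{lemma:expose}), and I would quote those uncrossing facts rather than rederive them.

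Next I would pick $B$ to be an inclusion-wise minimal member of $\supp(z)$ (an innermost blossom). Because $\supp(z)$ is laminar, $B$ contains no other element of $\supp(z)$, so within $E(B)$ the only active blossom inequality is the one for $B$ itself. Since $x$ restricted to $E(B)$ is (a convex combination of) near-perfect matchings of the odd set $B$, it uses $\frac{|B|-1}{2}$ edges of $E(B)$; as $|B|\geq 3$, there is at least one edge $e\in\supp(x)\cap E(B)$. Delete this $e$: set $x':=x-\chi(e)$ and I claim $x'\in P_{t-1}^{c}(H)$, witnessed by the dual solution $(0,z',\gamma')$ with $z'_U:=z_U$ for $U\neq B$, $z'_B:=z_B$ unchanged, and $\gamma':=\gamma$ --- in fact I expect the cleanest choice is to keep $z'=z$ and $\gamma'=\gamma$ outright, since deleting one edge from a tight blossom of odd size $|B|$ turns $x(E(B))=\frac{|B|-1}{2}$ into $x'(E(B))=\frac{|B|-1}{2}-1=\frac{(|B|-2)-1}{2}$, which is only a \emph{smaller} value than the right-hand side, so feasibility of $x'$ in $P_{t-1}(H)$ is immediate, and dual feasibility of $(0,z,\gamma)$ in $D_{t-1,c}(H)$ is literally the same set of inequalities. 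The objective value drops by exactly $\gamma$ on the primal side ($c^Tx'=c^Tx-c(e)=c^Tx-\gamma$, using tightness of $e$ and $y=0$, and the fact that $e$ lies in no blossom of $\supp(z)$ other than $B$... which I must check) and by exactly $\gamma$ on the dual side (the $t\gamma$ term becomes $(t-1)\gamma$), so strong duality is preserved and $x'\in P_{t-1}^c(H)$, $(0,z,\gamma)\in D_{t-1}^c(H)$, with $\supp(z)$ still laminar. One also needs $t-1\geq 1$, which holds since $t\geq 2$, so $P_{t-1}(H)$ is a genuine matching polytope slice.

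The main obstacle, and the point that needs real care, is the interaction of the deleted edge $e$ with the \emph{other} (larger) blossoms of $\supp(z)$ containing $B$: if $B\subsetneq U$ for some $U\in\supp(z)$, then $e\subseteq U$ too, so $x'(E(U))=\frac{|U|-1}{2}-1$ and the complementary slackness condition ``$z_U>0 \Rightarrow U$ tight'' is \emph{violated} for $x'$ with the unchanged $z$. So the naive ``keep $z$ fixed'' argument fails, and the dual must be repaired: I would instead choose $e$ inside an innermost blossom $B$ that is simultaneously \emph{maximal} among members of $\supp(z)$ (i.e.\ $B=S_i^*$-type), and argue that such a maximal blossom always contains a usable $\supp(x)$-edge --- or, if every maximal blossom is already perfectly saturated in a way that forces the deleted edge to sit in a nested chain, push $\gamma$ and the $z_U$ along that chain (decreasing each $z_U$ for $U\supseteq B$ until one hits zero, absorbing the slack into $\gamma$) to restore complementary slackness while keeping $z'\geq 0$, laminarity, and the objective balance. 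Handling this dual-repair step correctly --- ensuring $z'$ stays nonnegative and that the objective still matches $c^Tx'$ --- is where the bulk of the work lies; the primal side and the cardinality bookkeeping are routine.
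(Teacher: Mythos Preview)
Your overall framework---delete one edge from the optimal matching and repair the dual via complementary slackness---is exactly right, and you correctly isolate the real difficulty: once the deleted edge $e$ sits inside a nested chain of blossoms in $\supp(z)$, every set in that chain becomes slack for $x'$ and the unchanged dual no longer certifies optimality. Where the proposal breaks down is in the two suggested repairs.

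First, an ``innermost blossom that is simultaneously maximal'' need not exist; in a laminar family with any genuine nesting, minimal and maximal members are different sets. More seriously, choosing $e$ in an \emph{innermost} blossom is the wrong direction: it maximises, rather than minimises, the number of sets of $\supp(z)$ containing $e$. The paper does the opposite. It first disposes of the easy case where some $e\in\supp(x)$ lies outside every top-level blossom (then $c(e)=\gamma$ by tightness, and the unchanged dual works). Otherwise it picks $e\in\supp(x)\cap E(S_1)$ whose \emph{only} containing set in $\supp(z)$ is the top-level set $S_1$; such an edge exists by a simple counting argument using tightness of $S_1$ and its children.

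Second, your ``push $\gamma$ and the $z_U$ along the chain $U\supseteq B$'' repair does not preserve complementary slackness: any edge $f\in\supp(x')$ that lies in the same top-level blossom as $e$ but not in $B$ sees its dual slack strictly increase, destroying tightness. The correct repair is global, not along a single chain: order the top-level sets $S_1,\dots,S_\ell$ by increasing $z$-value, subtract $z_{S_1}$ from \emph{every} top-level $z_{S_i}$, and add $z_{S_1}$ to $\gamma$. This leaves the left-hand side of the dual constraint unchanged for every edge inside some $E(S_i)$ (so all $\supp(x')$-edges stay tight), weakly increases it for edges outside (so feasibility is kept), sets $z'_{S_1}=0$ (so the now-slack constraint for $S_1$ causes no trouble), keeps $z'\geq 0$ by the choice of $S_1$, and drops the dual objective by exactly $c(e)=\gamma+z_{S_1}$, matching the primal drop. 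Your sketch had the primal dropping by $\gamma$ alone, which already signals that the accounting for the $z_B$ term was off.
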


\begin{proof} Let $\cL = \supp(z)$ be the laminar family defined by the support of $z$. Let $S_1, \dots, S_\ell \in \cL$ be the top level sets of $\cL$ (i.e. containment maximal sets), ordered so that
$$0 < z_{S_1} \leq z_{S_2} \leq \dots \leq z_{S_\ell}.$$
Since $y = 0$, if there exists $e \in \supp(x) \cap (E(H)\backslash\bigcup_{i \in [\ell]}E(S_i))$ then by complementary slackness, $x-\chi(e) \in P_{t-1}^c(H)$ and $(0,z,\gamma) \in D_{t-1}^c(H)$. Thus we may assume that $x(uv) = 0$ for all $uv \in E(H)\backslash \bigcup_{i \in [\ell]} E(S_i)$. Let $uv \in E(S_1)$ such that $x_{uv} = 1$ and $S_1$ is a minimal set in $\cL$ containing $uv$. Complementary slackness assures us that such $uv$ exists. Let $x' = x - \chi(uv)$. Define $z' \in \R^{\cO(H)}$ as follows
$$ z'_U = \begin{cases}
z_{U} - z_{S_1}, &\text{if $U = S_i$ for some $i \in [\ell]$} \\
z_{U}, &\text{otherwise}.
\end{cases}$$
Define $\gamma' = \gamma + z_{S_1}$. We will show $x' \in P_{t-1}^c(H)$ and $(0,z',\gamma') \in D_{t-1}^c(H)$. First we verify feasibility. Indeed, since $x$ is a matching with $t$ edges, $x'$ is a matching with $t-1$ edges and primal feasibility is satisfied. For dual feasibility, observe by our choice of $S_1$ that $z' \geq 0$, and for any edge $uv \in \bigcup_{i \in [\ell]} E(S_i)$ the net effect on the left hand side of the dual constraint associated with $e$ is $0$. For $uv \in E(H)\backslash \bigcup_{i \in [\ell]}E(S_i)$, the left hand side of the dual constraint associated with $e$ increased by $z_{S_1}$.
\paragraph{}
Clearly since $\supp(z)$ is laminar, and $\supp(z') = \supp(z) \backslash \{S_1\}$, we have that $\supp(z')$ is laminar. It remains to verify optimality, which we will show via complementary slackness. Consider some $uv \in E(H)$ for which $x'_{uv} > 0$. Then $uv \in E(S_i)$ for exactly one $i$. Hence
$$\gamma' + \sum_{U \in \cO(G) : uv \subseteq U} z'_U = \gamma + z_{S_1} + (-z_{S_1}) + \sum_{U \in \cO(G) : uv \subseteq U} z_U  = c(uv)$$
where the last equality follows from complementary slackness for $x$ and $(0,z,\gamma)$. Lastly, let $U \in \cL$ such that $z'_U > 0$. By our choice of $z'$, $U \neq S_1$ and so
$$x'(E(U)) = x(E(U)) = \frac{|U|-1}{2}$$
where the last equality follows from complementary slackness for $x$ and $(0,z,\gamma)$.
$\blacksquare$\end{proof}

\begin{proof}
(Of $\textbf{Lemma}$~\ref{lem:cardinality-matching}) Let $i \in [k]$ and let $H = G[S_i^*]$. Let $t = |M^*\cap E(H)|$, and let $c \in \R^{E(H)}$ be defined by $c(uv) = w(uv) - x(uv)$ for all $uv \in E(H)$. Let $(y,z,\gamma) \in D_{t}^{c}(H)$. Since $c^T\chi(M) = -\ex(x,M)$ for all $M \in \cM(H)$, by Lemma~\ref{lemma:inessential-tight} and complementary slackness, we have $y=0$. Now we use standard uncrossing techniques (see for ex.~\cite[Pages~141-150]{lau2011iterative}) to obtain $(0,z,\gamma)\in D_t^c(H)$ with $\supp(z)$ laminar. Note that $M^*\cap E(H)$ is in $P_t^c(H)$, otherwise we can replace $M^*\cap E(H)$, within $H$, with an optimal matching in $P_t^c(H)$ to obtain a matching with lower excess than $M^*$, contradicting the universal tightness of $M^*$. Apply Lemma~\ref{lemma:cardinality-matching} inductively to obtain $M'_i \subseteq M^*\cap E(H)$ such that $\chi(M'_i) \in P_{|M'\cap E(H)|}^c(H)$. Since all $S_i^*$ are disjoint, the matching $M' = \bigcup_{i \in [k]} M'_i$ is the desired matching.
$\blacksquare$\end{proof}

\section{Example of a Matching Game With Empty Core}\label{appendix:example}

Consider the example in Fig.~\ref{fig:example}. This graph $G=(V,E)$ is a $5$-cycle with two adjacent edges $15$ and $12$ of weight $2$, and the remaining three edges of weight $1$. Since the maximum weight matching value is $\nu(G) = 3$, but the maximum weight fractional matching value is $\frac{7}{2}$, the core of this game is empty. The allocation $x^*$ defined by $x^*(1) = \frac{7}{5}$ and $x^*(2) = x^*(3) = x^*(4) = x^*(5) = \frac{2}{5}$ lies in the leastcore. Each edge has the same excess, $-\frac{1}{5}$, and any coalition of four vertices yields a minimum excess coalition with excess $-\frac{2}{5}$. Hence the leastcore value of this game is $\varepsilon_1=-\frac{2}{5}$.

\begin{wrapfigure}{r}{0.25\textwidth}
	\begin{center}
		\includegraphics[width=0.2\textwidth]{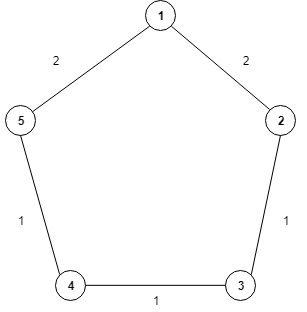}
	\end{center}
	\caption{Matching Game with Empty Core}\label{fig:example}
\end{wrapfigure}
In fact, we can see that $x^*$ is the nucleolus of this game. To certify this we can use the result of Schmeidler~\cite{schmeidler1969nucleolus} that the nucleolus lies in the intersection of the leastcore and the prekernel. For this example, the prekernel condition that for all $i\neq j \in V$,
$$ \max_{S\subseteq V\backslash\{j\}} x(S\cup\{i\}) -\nu(S\cup\{i\}) = \max_{S\subseteq V\backslash\{i\}} x(S\cup\{j\}) -\nu(S\cup\{j\})$$ reduces to the condition that the excess values of non-adjacent edges are equal. Since $G$ is an odd cycle, this implies that all edges has equal excess, i.e.
$$\ex(x,12)=\ex(x,23)=\ex(x,34)=\ex(x,45)=\ex(x,15).$$
Combining the four equations above with the leastcore condition that $x(V) = \nu(G)$
we obtain a system of equations with the unique solution $x^*$. Hence the intersection of the leastcore and prekernel is precisely $\{x^*\}$, and so by Schmeidler, $x^*$ is the nucleolus.

%
%
%

\end{document}